\newcommand{\set}[1]{\left\{#1\right\}}
\newcommand{\Real}{\mathbb R}
\newcommand{\eps}{\varepsilon}
\newcommand{\abs}[1]{\left\vert#1\right\vert}
\newcommand{\norm}[1]{\left\Vert#1\right\Vert}
\newtheorem{assumption}{Assumption}
\newcommand{\T}{\mathcal{T}}
\newcommand{\always}{\Box}
\newcommand{\eventually}{\Diamond}
\newcommand{\ltlx}{\text{LTL}_{\backslash\scriptstyle\bigcirc}}
\newcommand{\U}{\mathcal{U}}
\newcommand{\R}{\mathcal{R}}
\renewcommand{\S}{\mathcal{S}}
\renewcommand{\phi}{\varphi}
\renewcommand{\u}{\mathbf{u}}
\newcommand{\x}{\mathbf{x}}
\newcommand{\z}{\mathbf{z}}
\newcommand{\B}{\mathbb{B}}
\newcommand{\ra}{\rightarrow}
\newtheorem{remark}{Remark}
\newtheorem{ex}{Example}
\newtheorem{problem}{Problem}
\newtheorem{definition}{Definition}
\newtheorem{theorem}{Theorem}
\newtheorem{lemma}{Lemma}
\newtheorem{prop}{Proposition}
\title{\LARGE \bf
Robust Decidability of Sampled-Data Control of Nonlinear Systems with Temporal Logic Specifications
}
\author{Jun Liu
\thanks{Jun Liu is with the Department of Applied Mathematics, University of Waterloo, Waterloo, Ontario N2L 3G1, Canada. Email:         {\tt\small j.liu@uwaterloo.ca}}%
\thanks{This work was supported in part by the NSERC DG, CRC, and ERA programs.}}
\begin{document}

\maketitle
\thispagestyle{empty}
\pagestyle{empty}

\begin{abstract}
This paper explores the theoretical limits of using discrete abstractions for nonlinear control synthesis.  More specifically, we consider the problem of deciding continuous-time control with temporal logic specifications. We prove that sampled-data control of nonlinear systems with temporal logic specifications is robustly decidable in the sense that, given a continuous-time nonlinear control system and a temporal logic formula, one can algorithmically decide whether there exists a robust sampled-data control strategy to realize this specification when the right-hand side of the system is slightly perturbed by a small disturbance. If the answer is positive, one can then construct a (potentially less) robust sampled-data control strategy that realizes the same specification. The result is proved by constructing a  robustly complete abstraction of the original continuous-time control system using sufficiently small discretization parameters. We illustrate the result with three nonlinear control examples. 
\end{abstract}

\section{Introduction}

The control of dynamical systems to satisfy formal specifications (e.g. temporal logics) has received considerable attention in the past decade \cite{belta2017formal,tabuada2009verification}. 
This is partially motivated by the increasing demand of autonomous decision making by physical systems (e.g. mobile robots) in uncertain environments to achieve more complex tasks \cite{kloetzer2007temporal,fainekos2009temporal,kress2011correct}. Many system relations have been proposed as abstractions of nonlinear systems \cite{pola2008approximately,zamani2012symbolic,liu2016finite,reissig2017feedback,nilsson2017augmented}. Such abstractions are desirable for several reasons. First, they are sound in the sense that they can be used to design provably correct controllers with respect to a given formal specification. Second, they are often finite (e.g. finite transition systems) and the original control design problem over an infinite state space can be effectively solved as a search problem over a finite structure. Third, the construction of these abstractions can be automated   with the aid of a computer.

One of the main drawbacks of abstraction-based approaches is their computational cost, which is often incurred when a finer and finer abstraction is used in the hope of finding a controller when a coarser abstraction fails to yield one. However, without theoretical guarantees on completeness, i.e. 
if a control strategy exists, then it can be found by an abstraction-based approach, such computational efforts can be futile. This motivates the research in this paper. In this paper, we seek to answer the question whether a computational procedure exists to decide if a control strategy exists for a given formal specification. We consider general continuous-time nonlinear control systems, but restrict our attention to a specific class of control strategies, namely sample-and-hold control strategies. The main result of the paper shows that if there exists a robust sample-and-hold control strategy for the continuous-time nonlinear control system to realize a given temporal logic specification, then one can construct a robust control strategy for the system to realize the same specification. 

\subsection{Related work}

We review several results in the literature that are most relevant to the result presented in this paper. In \cite{tabuada2006linear}, it is shown that bisimilar (equivalent) symbolic models exist for controllable discrete-time linear systems and, as a result, temporal logic control for  discrete-time controllable linear systems is decidable. For nonlinear systems, the authors of \cite{pola2008approximately} showed that approximately bisimilar models can be constructed for incrementally stable systems \cite{angeli2002lyapunov}. The assumption of incremental stability essentially allows one to construct a deterministic transition system that can approximate a sample-data representation of the original nonlinear system to any degree of precision. For nonlinear systems without the incremental stability assumption, the authors of \cite{zamani2012symbolic} showed that symbolic models that approximately alternatingly simulate the sample-data representation of a general nonlinear control system can be constructed. Because a sampled-data representation is used in \cite{pola2008approximately,zamani2012symbolic}, inter-sample behaviours are not considered in such approximations. The authors of \cite{ozay2013computing} considered partition-based over-approximations of nonlinear systems for synthesizing controllers for temporal logic specifications. Because no time-discretization is used, correctness guarantee is proved for continuous-time trajectories. In \cite{liu2016finite}, the authors proposed abstractions of continuous-time nonlinear systems using grid-based approximations. A salient feature of such abstractions is that they under-approximate the control space so that all controls used by the abstractions can be implemented by the original system. At the same time, they over-approximate the reachable sets of the original system under a control so that correctness can be guaranteed (behaviours of the original system are included by the behaviours of the abstract system). In addition, the work in \cite{liu2016finite} also tackled the problem of synthesizing robust controllers and reasoned inter-sampling behaviours so that correctness is proved in continuous-time semantics of linear temporal logic.  
In \cite{reissig2017feedback}, the authors proposed feedback refinement relations that can be used for control design for systems modelled by difference inclusions. This system relation has the same feature of under-approximating the control space, while over-approximating the reachable sets of the original system. 
 Nonetheless, all the above mentioned abstractions are sound but not complete, with the exception of \cite{tabuada2006linear,pola2008approximately}, where additional assumptions on system dynamics are needed (controllable linear and incrementally stable, respectively). In \cite{liu2017robust}, a notion of completeness for abstractions of discrete-time nonlinear systems is proved by way of robustness (termed as robust completeness). It is shown that with sufficient computational sources, one can construct a finite transition system that robustly abstracts a discrete-time nonlinear system and, at the same time, is robustly abstracted by a slightly perturbed version of the same system. The case for continuous-time control system, however, is left open. In this paper, we prove that robustly complete abstractions of sampled-data continuous-time control systems also exist under a mild assumption (i.e. local Lipschitz continuity) on system dynamics and use this to show decidability of robust realization of temporal logic formulas for continuous-time nonlinear systems by using a sample-and-hold control strategy. We also note that in \cite{li2018invariance,li2018robustly} robust completeness is achieved for invariance and reachability specifications using interval analysis for direct control synthesis on the continuous state space without first constructing abstractions.

The result in the paper is of potential interest for connecting validated computation in numerical analysis with formal methods for control design. Numerical analysis plays a paramount role in all branches of science and engineering. Validated computation \cite{tucker2011validated,nedialkov1999validated,moore1979methods} is a branch of numerical analysis that seeks to compute with guarantees. It  
seems natural to ask to what extent validated computation can help with control systems design with formal guarantees. There is a fundamental difference, however, between the convergence analysis of numerical methods (or reachability analysis) for differential equations and the type of completeness results one would like to seek for control synthesis. The former is often done on a finite time horizon and for convergence to a fixed trajectory (or a set of trajectories). The latter is on the closeness of system behaviours under a control strategy over an infite time horizon. In essence, the difference lies between analysis and design: Numerical analysis and validated computation are geared towards analysis, whereas formal methods are often used for designing controllers in this setting. The system relations mentioned above can conveniently bridge this difference by constructing arbitrarily close system approximations with the help of validated computational tools, including those for accurate reachability analysis of dynamical systems \cite{girard2005reachability,rungger2018accurate,althoff2008reachability,dang2010accurate,kong2015dreach}.

\section{Problem formulation}

\subsection{Continuous-time control system}

Consider a \emph{continuous-time nonlinear control system} of the form:
\begin{equation}\label{eq:sys}
x'= f(x,u),
\end{equation}
where $x\in X\subset \mathbb{R}^n$ denotes the system state and $u\in U\subset \mathbb{R}^m$ denotes the control input. We assume that $f:\,\Real^n\times \Real^m\ra\Real^n$ satisfies the basic regularity assumptions (e.g. local Lipschitz continuity) such that, given any sufficiently regular control input signal and any initial condition, there exists a unique local solution to (\ref{eq:sys}). 

A \emph{trajectory} of (\ref{eq:sys}) is a pair $(\x,\u)$, where $\x:\,\Real^+\ra X$ is a state trajectory, $\u:\,\Real^+\ra U$ is an input trajectory, and $(\x,\u)$ satisfies (\ref{eq:sys}) in the sense that $\x'(t)=f(\x(t),\u(t))$ for all $t\ge 0$.  

A \emph{(sample-and-hold) control strategy} with sampling period $\tau>0$ for (\ref{eq:sys}) is a partial function of the form:
\begin{equation}\label{eq:control}
\sigma(x_0,\cdots,x_i) = u_i\in U,\; \forall i=0,1,2,\cdots,
\end{equation}
where $x_0,\cdots,x_i$ is a finite sequence of sampled states taken at sampling times $t_0=0,\cdots,t_i$ and $u_i$ is a constant control input. The sampling times $t_0$, $t_1$, $t_2$, $\cdots$ satisfy $t_{i+1}-t_i=\tau$ for all $i\ge 0$, where $\tau>0$ is the sampling period that represents the duration for which the constant $u_i$ is applied to the system. 

A \emph{$\sigma$-controlled trajectory} is a trajectory $(\x,\u)$ resulting from executing the control strategy $\sigma$, where $\u$ is defined by 
$
\u(t) = u_i$ for $t\in [t_i,t_{i+1})$,
where $t_i = i\tau$ and $u_i$ is determined by (\ref{eq:control}).

Given a positive integer $N$, a control strategy $\sigma$ is said to have \emph{dwell time $N$}, if each control input $u_i$ is used for a multiple of $N$ times, that is, if $i=m N$ for some integer $m$, then 
\begin{equation}\label{eq:dwell}
u_i = u_{i+1} = \cdots = u_{i+N-1}. 
\end{equation}
This can be easily encoded by a control strategy with a simple counter. This seemingly peculiar definition plays a role later on in proving completeness for any fixed, but not necessarily small, sampling period. 

\subsection{$\delta$-perturbed control system}

Given a scalar $\delta\ge 0$, a \emph{$\delta$-perturbation} of the {continuous-time nonlinear control system} \eqref{eq:sys} is the differential inclusion 
\begin{equation}\label{eq:sys2}
x'(t) \in f(x,u) + \delta B. 
\end{equation}
A \emph{trajectory} of (\ref{eq:sys2}) is a pair $(\x,\u)$, where $\x:\,\Real^+\ra X$ is a state trajectory, $\u:\,\Real^+\ra U$ is an input trajectory, and $(\x,\u)$ satisfies (\ref{eq:sys2}) in the sense that $\x'(t)\in f(\x(t),\u(t)) + \delta B$ for all $t\ge 0$. 

We call system (\ref{eq:sys}) the nominal system and denote it by $\S$. The $\delta$-perturbation of $\S$ defined by \eqref{eq:sys2} is denoted by $\S_\delta$. Apparently, $\S_0$ is exactly $\S$.

\subsection{Specifications and labelling function}

We use linear temporal logic (LTL) to specify system properties. We omit the syntax and semantics LTL formulas for limited space. For these technical details, readers are referred to \cite{baier2008principles} or the Appendix of this paper. In this section, we  emphasize the role of the lablelling function in connecting a concrete state space to an abstract logic formula. The semantics of LTL over continuous-time and discrete-time signals  are achieved by  a \emph{labelling function} $L:\,\Real^n\ra 2^{\Pi}$ that maps a state to a set of propositions (i.e., a subset of $\Pi$) that hold true for this state.

\subsubsection{Strengthening of labelling function} 

In the following, we need to reason about satisfaction of LTL formulas by continuous-time trajectories and by discrete-time sequences, and in particular, the implication between the two. For this purpose, we need to introduce the notion of an $\eps$-strengthening of a labelling function \cite{liu2017robust}. For $\eps>0$, a labelling function $\hat{L}:\,\Real^n\ra 2^\Pi$ is said to be the \emph{$\eps$-strengthening} of another labelling function $L:\,\Real^n\ra 2^\Pi$, if $\pi\in \hat{L}(x)$ \emph{if and only if} $\pi\in L(y)$ for all $y\in x+\eps\mathbb{B}$. With a possible abuse of notion, we sometimes use $L_\eps$ to denote the $\eps$-strengthening of $L$.

The following proposition relates different strengthening of labelling functions, which is used later in the proof of the main theorem. 

\begin{prop}\label{prop:strengthening}
Suppose that $\eps_2\ge \eps_1$. Let $L_{\eps_1}$ be the $\eps_1$-strengthening of a labelling function $L:\,\Real^n\ra 2^{\Pi}$. Let $(L_{\eps_1})_{\eps_2}$ be the $\eps_2$-strengthening of $L_{\eps_1}$. Let $L_{\eps_1+\eps_2}$ be the $(\eps_1+\eps_2)$-strengthening of $L$. Then $L_{\eps_1+\eps_2}(x)\subset (L_{\eps_1})_{\eps_2}(x)$ for all $x\in\Real^n$. 
\end{prop}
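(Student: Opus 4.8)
The plan is to unfold the definition of $\eps$-strengthening into a universally quantified condition over a ball and then reduce the inclusion to a single application of the triangle inequality. First I would record the two memberships I need to relate. By definition, $\pi\in L_{\eps_1+\eps_2}(x)$ precisely when $\pi\in L(w)$ for every $w\in x+(\eps_1+\eps_2)\mathbb{B}$. On the other hand, unfolding the nested definition of the composed strengthening, $\pi\in (L_{\eps_1})_{\eps_2}(x)$ precisely when $\pi\in L_{\eps_1}(z)$ for every $z\in x+\eps_2\mathbb{B}$, which in turn means $\pi\in L(w)$ for every $z\in x+\eps_2\mathbb{B}$ and every $w\in z+\eps_1\mathbb{B}$.

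Next I would fix a proposition $\pi\in L_{\eps_1+\eps_2}(x)$ and verify the defining condition for $(L_{\eps_1})_{\eps_2}(x)$. Take any $z$ with $\norm{z-x}\le \eps_2$ and any $w$ with $\norm{w-z}\le\eps_1$. The triangle inequality gives $\norm{w-x}\le\norm{w-z}+\norm{z-x}\le\eps_1+\eps_2$, so $w\in x+(\eps_1+\eps_2)\mathbb{B}$. Since $\pi\in L_{\eps_1+\eps_2}(x)$, this yields $\pi\in L(w)$. As $z$ and $w$ were arbitrary, the defining condition for $(L_{\eps_1})_{\eps_2}(x)$ holds, and hence $\pi\in (L_{\eps_1})_{\eps_2}(x)$. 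This establishes the claimed inclusion for every $x\in\Real^n$.

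I expect no serious obstacle here: the entire content is the observation that the region over which the composed strengthening quantifies is contained in (indeed equal to, by the Minkowski-sum identity $(\eps_2\mathbb{B})\oplus(\eps_1\mathbb{B})=(\eps_1+\eps_2)\mathbb{B}$) the ball of the single strengthening. The only point requiring care is correctly tracking the nested ``for all'' quantifiers when unfolding $(L_{\eps_1})_{\eps_2}$, since $\eps$-strengthening \emph{shrinks} label sets and it is easy to accidentally invert an inclusion. It is worth remarking that the reverse inclusion also holds, by placing an intermediate point $z$ on the segment from $x$ to $w$, so the two label sets are in fact equal; moreover the hypothesis $\eps_2\ge\eps_1$ is not actually needed for the stated one-sided inclusion, which follows from the triangle inequality alone.
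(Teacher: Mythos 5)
Your proof is correct and follows essentially the same route as the paper's: unfold the nested definition of $(L_{\eps_1})_{\eps_2}$, fix $z\in x+\eps_2\B$ and $w\in z+\eps_1\B$, and conclude via the triangle inequality $\abs{w-x}\le\abs{w-z}+\abs{z-x}\le\eps_1+\eps_2$. Your closing observations --- that the inclusion is in fact an equality by the Minkowski-sum identity, and that the hypothesis $\eps_2\ge\eps_1$ is not needed for the stated direction --- are both accurate, though the paper does not make them.
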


\begin{proof}
Pick $\pi\in L_{\eps_1+\eps_2}(x)$, then $\pi\in L(y)$ for all $y\in x + (\eps_1+\eps_2)\mathbb{B}$. To prove $\pi \in (L_{\eps_1})_{\eps_2}(x)$, we have to show that $\pi \in L_{\eps_1}(z)$ for all $z\in x + \eps_2\B$. Fix any such $z$, we verify $\pi \in L_{\eps_1}(z)$ by showing that $\pi\in L(w)$ for all $w\in z+\eps_1 \B$. This is true by the triangle inequality $\abs{w-x}\le \abs{w-z}+\abs{z-x} \le \eps_1+\eps_2$. 
\end{proof}

\subsection{Robust decidability of sampled-data control}

Given a temporal logic formula $\phi$ together with a labelling function $L$, we would like to design a sample-and-hold control strategy such that the resulting continuous-time trajectories of $\S_\delta$ satisfy $(\phi,L)$. If such a control strategy exist, we say $(\phi,L)$ is \emph{realizable} for $\S_\delta$ (by a sample-and-hold control strategy).

We formulate the robust decidability problem for control of system (\ref{eq:sys}) as follows. 

\begin{problem}[Robust decidability]\label{prob:main}
Given a temporal logic formula $\phi$ with a labelling function $L$, a sampling period $T>0$, numbers $\delta_2>\delta_1\ge 0$ and $\eps>0$, decide whether one of the following is true:
\begin{itemize}
\item There exists (and one can algorithmically construct) a sample-and-hold control strategy with sampling period $T$ for $\S_{\delta_1}$ to realize the specification $(\phi,L)$; or
\item There does not exist a sample-and-hold control strategy with sampling period $T$ for $\S_{\delta_2}$ to realize the specification $(\phi,L_\eps)$. 
\end{itemize}
\end{problem}

We shall give a positive answer to this question when $f$ satisfies the following assumption. 

\begin{assumption}
The sets $X$ and $U$ are compact and $f$ is locally Lipschitz in both $x$ and $u$.
\end{assumption}

With this assumption, it follows that there exists a constant $L\ge 0$ 
such that 
\begin{align*}
\abs{f(x,u)-f(y,u)}&\le L\abs{x-y},\\
\abs{f(x,u)-f(x,v)}&\le L\abs{u-v}, 
\end{align*}
for all $x,y\in X$ and $u,v\in U$. 

\begin{remark}
We focus on sampled-data control strategies in this paper. The restriction is not a severe one, as most of the literature on continuous-time control synthesis considers time-discretized versions of continuous-time plants \cite{pola2008approximately,pola2009symbolic,reissig2017feedback,zamani2012symbolic,liu2016finite}. Here the sampling time is not fixed \emph{a priori}, but considered as a design parameter in continuous-time control synthesis. The design of sampled-data control strategies is also favourable in practice, because such control strategies are readily implementable on digital controllers. We would also like to highlight that, despite the use of time-discretization for the control signals, the reasoning of correctness, with respect to satisfaction of temporal logic formulas, is in continuous time.
\end{remark}

\begin{remark}
In our problem formulation, the sampling period $T$ can be an arbitrarily but fixed number. The restriction to a fixed sampling period is not a severe one either. In fact the main result of this paper shows that Problem \ref{prob:main} can be answered for \textbf{each} sufficiently small sampling period. We also proved that there exists a single decision procedure to decide robust realizability of a specification for \textbf{all} sampling periods greater than a threshold value (e.g. a lower bound limited by the sampling frequency of the sensor). Nonetheless, from a technical point of view, we are not able to prove decidability in the following sense: decide one of following (1) there exists a sample-and-hold control strategy for $\S_{\delta_1}$ to realize the specification $(\phi,L)$; or (2) there does not exist a sample-and-hold control strategy for $\S_{\delta_2}$ to realize the specification $(\phi,L_\eps)$. We leave this as an open problem.  
\end{remark}

\begin{remark}
We only consider trajectories that stay in the set $X$ for all $t\ge 0$. This is technically very easy to enforce as a safety specification, $\Box X$. When constructing abstractions, any out of domain transitions need to be encoded as such so that correctness is preserved and all the trajectories produced by a synthesized controller will satisfy the specification as well staying in the set $X$ for all $t\ge 0$.
\end{remark}

\section{Transition Systems and Finite abstractions}

In this section, we define finite abstractions of $\S_\delta$ that can be used to synthesize sampled-data control strategies for $\S_\delta$. 

\subsection{Transition systems}

\begin{definition}\em \label{def:ts}
  A \emph{transition system} is a tuple
  \begin{displaymath}
    \T=(Q,A,R), 
  \end{displaymath}
where
  \begin{itemize}
  \item $Q$ is the set of states;  
  \item $A$ is the set of actions;
  \item $R \subseteq Q \times A \times Q$ is the transition relation;
  \end{itemize}
\end{definition}

For each action $a\in A$ and $q\in Q$, we define the $a$-successor of $q$ by 
$$
\text{Post}_{\T}(q,a)=\set{q':\,q'\in Q\text{ s.t. }(q,a,q')\in R}. 
$$
To simplify the presentation, we assume in this paper that, for the transition systems under consideration, every action is admissible for every state in the sense that $\text{Post}_{\T}(q,a)\neq \emptyset$ for all $q\in Q$ and all $a\in A$.

An \emph{execution} of $\T$ is an infinite alternating sequence of states and actions  
$\rho=q_0,a_0,q_1,a_1,q_2,a_2,\cdots,$
where $q_0$ is some initial state and $(q_i,a_i,q_{i+1})\in R$ for all $i\ge 0$. 
The \emph{path} resulting from the execution $\rho$ above is the sequence
$
\text{Path}(\rho)=q_0,q_1,q_2\cdots.
$
A \emph{control strategy} $\mu$ for a transition system $\T$ is a partial function $\mu:\,(q_0,q_1,\cdots,q_i)\mapsto a_i$ that maps the state history to the next action. An \emph{$\mu$-controlled execution} of a transition system $\T$ is an execution of $\T$, where for each $i\geq 0$, the action $a_i$ is chosen according to the control strategy $\mu$; $\mu$-controlled paths are defined in a similar fashion. A dwell-time control strategy is defined in the same way as that for $\S_\delta$ in (\ref{eq:dwell}). 

\subsection{Transition systems for sampled-data control systems}

With a fixed sampling period $\tau>0$, we define the transition system representation of $\S_\delta$ as follows. 

\begin{definition}\em  \label{def:rtcts}
The system $\S_\delta$ with a sampling period $\tau>0$ can be interpreted as a transition system 
$$
\T_{\delta,\tau}= (Q,A,R),  
$$
by defining 
\begin{itemize}
\item $Q=X$; 
\item $A=U$;
\item $(x_0,u,x_1)\in R$ if and only if there exists a trajectory $\x:[0,\tau]\ra X$ such that $x(0)=x_0$, $x_1=x(\tau)$, and $x'(s)\in f(x(s),u)+\delta B$ for all $s\in[0,\tau]$. 
\end{itemize}
\end{definition}

We say that an execution $\rho$ of $\T_{\delta,\tau}$ satisfies an $\ltlx$ formula $\phi$ with a labelling function $L$, written as $\rho\vDash(\phi,L)$, if and only if $\text{Path}(\rho)\vDash(\phi,L)$. For a control strategy $\mu$ for $\T_{\delta,\tau}$, if all $\mu$-controlled executions of $\T_{\delta,\tau}$ satisfy $\phi$ with respect to $L$, we write $(\T_{\delta,\tau},\mu)\vDash (\phi,L)$. If such a control strategy $\mu$ exists, we say that $(\phi,L)$ is \emph{realizable} for $\T_{\delta,\tau}$.

The following proposition relates realizability of a temporal logic formula $\phi$ on a continuous-time control system with sampled-data control strategies of different sampling periods.  
\begin{prop}\label{prop:finersample}
Let $\phi$ be a temporal logic formula over $\Pi$ and $L:\,X\ra 2^\Pi$ be a labelling function. Suppose that $T=N\tau$, where $N$ is a positive integer. 
\begin{enumerate}
\item If $(\phi,L)$ is realizable for $\S_{\delta}$ with a sampled-data control strategy with sampling period $T$, then $(\phi,L)$ is realizable for $\S_{\delta}$ with a sampled-data control strategy with sampling period $\tau$ and dwell time $N$. 
\item Conversely, if $(\phi,L)$ is realizable for $\S_{\delta}$ with a sampled-data control strategy with sampling period $\tau$ and dwell time $N$, then $(\phi,L)$ is realizable for $\S_{\delta}$ with a sampled-data control strategy with sampling period $T$.
\end{enumerate}
\end{prop}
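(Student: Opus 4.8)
The plan is to prove both directions by exhibiting an explicit correspondence between the two control strategies that preserves the set of continuous-time trajectories, and therefore preserves satisfaction of $(\phi,L)$. The key observation is that a sampled-data control strategy with period $T=N\tau$ and one with period $\tau$ and dwell time $N$ apply controls that are piecewise-constant over exactly the same time intervals $[mT,(m+1)T)$; the difference is purely bookkeeping in how the sampled state history is indexed and how often the controller is queried. Since the underlying differential inclusion \eqref{eq:sys2} is driven only by the resulting input signal $\u$, two strategies that produce the same $\u$ on every continuous-time trajectory yield the same controlled trajectories, hence the same paths and the same satisfaction of $(\phi,L)$.

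For part (1), I would start from a strategy $\sigma$ with period $T$, so $\sigma$ is queried at the coarse sampling times $\bar t_m = mT$ and outputs $\bar u_m = \sigma(x_0,\dots,x_m)$ held on $[\bar t_m,\bar t_{m+1})$. I would define a strategy $\sigma'$ with period $\tau$ as follows: at fine sampling time $t_i=i\tau$, I inspect the index $i$. Writing $i = mN + r$ with $0\le r < N$, I let $\sigma'$ ignore the intermediate samples and output $u_i = \bar u_m$ whenever $r=0$, i.e.\ $\sigma'$ recomputes the coarse control only at indices that are multiples of $N$, using the subsequence $x_0,x_N,x_{2N},\dots,x_{mN}$ of sampled states (which are exactly the states sampled at the coarse times), and then repeats this value for the next $N-1$ steps, forcing \eqref{eq:dwell}. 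This $\sigma'$ has dwell time $N$ by construction and produces on $[\bar t_m,\bar t_{m+1})$ the constant input $\bar u_m$, identical to the input produced by $\sigma$. Therefore every $\sigma'$-controlled trajectory is a $\sigma$-controlled trajectory and vice versa, so $(\phi,L)$ is realized.

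For part (2), I would run the argument in reverse: given a dwell-time-$N$ strategy $\sigma'$ with period $\tau$, the dwell-time condition \eqref{eq:dwell} guarantees that the input only changes at indices that are multiples of $N$, i.e.\ only at the coarse times $\bar t_m=mT$. I would then define $\sigma$ with period $T$ by setting $\sigma(x_0,\dots,x_m)=\sigma'(x_0,x_1,\dots,x_{mN})$, where the arguments on the right are the states of a fine trajectory sampled at times $0,\tau,\dots,mN\tau=mT$. The subtlety here, which I expect to be the main point requiring care, is that $\sigma'$ in principle reads \emph{all} intermediate fine samples $x_{mN+1},\dots,x_{mN+N-1}$, whereas $\sigma$ only has access to the coarse samples $x_0,\dots,x_m$. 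I would argue that this causes no loss: by the dwell-time constraint the value $u_{mN}=\sigma'(x_0,\dots,x_{mN})$ is reused for the entire block and the intermediate samples cannot influence the control until the next coarse time, so without loss of generality $\sigma'$ may be taken to depend only on the coarse samples $x_0,x_N,\dots,x_{mN}$; equivalently, the intermediate states are determined by the dynamics and the already-fixed control, so supplying them to $\sigma$ adds no information. With this reduction, $\sigma$ produces exactly the same input signal as $\sigma'$, the controlled trajectories coincide, and realizability transfers.

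The verification that the two strategies produce identical trajectories is routine once the index correspondence $i=mN+r$ is in place, since the differential inclusion and the hold semantics are unchanged; the only genuine obstacle is the information-asymmetry in part (2), and the role of the dwell-time definition \eqref{eq:dwell} is precisely to neutralize it by ensuring the control is blind to intermediate samples.
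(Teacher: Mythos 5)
Your part (1) is correct and is exactly the correspondence the paper intends: the period-$\tau$ strategy ignores the intermediate samples, replays the period-$T$ strategy on the coarse subsequence, and holds each value for $N$ steps, so the two strategies generate identical input signals and hence identical sets of closed-loop trajectories.

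Part (2) contains a genuine gap, located precisely at the step you flag as the main point requiring care. Your resolution rests on the claim that ``the intermediate states are determined by the dynamics and the already-fixed control.'' That is true only for the nominal system $\S_0$, where solutions are unique. For $\S_\delta$ with $\delta>0$ --- which is the case the proposition is stated for, and the case invoked in the robustness theorems --- the system \eqref{eq:sys2} is a differential inclusion: from the same state, under the same constant control, infinitely many trajectories emanate, so the intermediate samples $x_{mN+1},\dots,x_{(m+1)N-1}$ carry information about the disturbance realization that is \emph{not} recoverable from the coarse samples and the past controls. Moreover, the dwell-time condition \eqref{eq:dwell} does not make a strategy ``blind'' to intermediate samples, contrary to your closing sentence: it constrains only the output pattern, and the control chosen at the next block boundary, $u_{(m+1)N}=\sigma'(x_0,\dots,x_{(m+1)N})$, may still depend on all intermediate observations. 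So your ``without loss of generality'' reduction is exactly the statement that needs proof; it does not follow from \eqref{eq:dwell}. The natural fallback --- let the coarse strategy feed $\sigma'$ fabricated intermediate samples consistent with the inclusion --- does not close the gap either: the actual closed-loop trajectory will in general disagree with the fabricated fine history, hence it is not a $\sigma'$-controlled trajectory, and its satisfaction of $(\phi,L)$ cannot be inferred from the realizability hypothesis.

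For what it is worth, the paper's own proof is a one-line assertion that a dwell-time-$N$ strategy with period $\tau$ ``corresponds exactly'' to a period-$T$ strategy; that correspondence is exact only under the implicit convention that a dwell-time strategy chooses each block's control from the block-boundary samples alone. Under that convention both directions are immediate (your part (1) argument run both ways), and this is the reading under which the proposition is used downstream. Under the paper's literal definition, which permits dwell-time strategies to exploit intermediate samples, part (2) with $\delta>0$ would require showing that this extra information is never needed for realizability --- a partial-information question that neither your argument nor the paper resolves. The clean repair is to restrict the definition of dwell-time strategies (or to state part (2) for strategies depending only on coarse samples), which is what your construction implicitly assumes.
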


The proof of the above proposition is straightforward. A dwell-time $N$ control strategy with sampling period $\tau$ corresponds exactly to a sampled-data control strategy with sampling period $T=N\tau$. Note that a control strategy can have memory and can easily encode consecutive use of the same control input for a finite number of times. 
 
By the assumption that $X$ and $U$ are compact sets, we can define $M=\max_{x\in X,\,u\in U}\abs{f(x,u)}$. The following proposition relates realizability of a temporal logic formula $\phi$ on a sampled-data transition system ($\T_{\delta,\tau}$) and a continuous-time system ($\S_\delta$). The main technical part is to show how discrete-time and continuous-time semantics of temporal logic formulas imply each other. 

\begin{prop}[Inter-sampple correctness]\label{prop:intersample}
Let $\phi$ be a temporal logic formula over $\Pi$. Let $L:\,X\ra 2^\Pi$ be a labelling function and $L_\eps$ be an $\eps$-strengthening of $L$. Suppose that $\eps\ge (M+\delta)\tau/2$. 
\begin{enumerate}
\item If $(\phi,L_\eps)$ is realizable for $\T_{\delta,\tau}$ with a dwell-time $N$ control strategy, then $(\phi,L)$ is realizable for $\S_\delta$ with a sampled-data control strategy with sampling period $\tau$ and dwell-time $N$. 
\item Conversely, if $(\phi,L_\eps)$ is realizable for $\S_\delta$ with a sampled-data control strategy with sampling period $\tau$  and dwell-time $N$, then $(\phi,L)$ is realizable for $\T_{\delta,\tau}$ with a dwell-time $N$ control strategy. 
\end{enumerate}
\end{prop}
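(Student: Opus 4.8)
The plan is to reduce the statement to a single \emph{trace-transfer lemma} that relates continuous-time and discrete-time satisfaction of $\phi$, and to establish that lemma from an elementary estimate on the inter-sample deviation. First I would set up the correspondence between $\sigma$-controlled trajectories of $\S_\delta$ with period $\tau$ and dwell time $N$, and dwell-time-$N$ executions of $\T_{\delta,\tau}$. Given a $\sigma$-controlled trajectory $(\x,\u)$, sampling at $t_i=i\tau$ yields $x_0,x_1,\dots$ with $x_i=\x(t_i)$; by Definition~\ref{def:rtcts} each restriction $\x|_{[t_i,t_{i+1}]}$ witnesses $(x_i,u_i,x_{i+1})\in R$, so the sample sequence is a path of $\T_{\delta,\tau}$. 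Conversely, any execution of $\T_{\delta,\tau}$ is obtained by concatenating transition-witnessing segments, producing a $\sigma$-controlled trajectory of $\S_\delta$. Letting the discrete strategy $\mu$ and the sampled-data strategy $\sigma$ issue identical inputs on identical sampled-state histories (which preserves dwell time $N$), it suffices to show, for one trajectory and its sample path, that $\mathrm{Path}\vDash(\phi,L_\eps)$ transfers to $(\x,L)\vDash\phi$ for part~(1), and that $(\x,L_\eps)\vDash\phi$ transfers to $\mathrm{Path}\vDash(\phi,L)$ for part~(2).

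The geometric core is as follows. On each interval $[t_i,t_{i+1}]$ the control is the constant $u_i$ and $\x'(s)\in f(\x(s),u_i)+\delta\B$, so $\abs{\x'(s)}\le M+\delta$. Integrating from each endpoint gives $\abs{\x(t)-x_i}\le (M+\delta)(t-t_i)$ and $\abs{\x(t)-x_{i+1}}\le (M+\delta)(t_{i+1}-t)$. Splitting the interval at its midpoint $t_i+\tau/2$ and using $\eps\ge (M+\delta)\tau/2$, every inter-sample state lies within $\eps$ of an adjacent sample point: $\abs{\x(t)-x_i}\le\eps$ on the first half and $\abs{\x(t)-x_{i+1}}\le\eps$ on the second half. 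By the definition of the $\eps$-strengthening this yields one-sided label containments: for part~(1), $L_\eps(x_i)\subseteq L(\x(t))$ on $[t_i,t_i+\tau/2]$ and $L_\eps(x_{i+1})\subseteq L(\x(t))$ on $[t_i+\tau/2,t_{i+1}]$; for part~(2), symmetrically $L_\eps(\x(t))\subseteq L(x_i)$ and $L_\eps(\x(t))\subseteq L(x_{i+1})$ on the two halves. Proposition~\ref{prop:strengthening} is exactly the bookkeeping needed when such containments are composed.

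From these containments I would pass to $\ltlx$ satisfaction. Decompose the continuous trace into successive intervals on which $L\circ\x$ is constant (finite variability), and match this refinement against the discrete label sequence using the half-interval containments above. Because $\phi$ is next-free, its truth is invariant under stuttering and temporal refinement; this is precisely why the midpoint subdivision, which inserts one extra label change per sampling step at $t_i+\tau/2$, cannot affect whether $\phi$ holds. I would then carry out the transfer by structural induction on $\phi$ in negation normal form, with the Boolean and $\U$/$\R$ cases following from the stuttering-closure of the matched interval decomposition, and note that parts~(1) and~(2) are mirror images, differing only in which sampled endpoint absorbs the deviation on each half-interval.

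The hard part is the inductive base case, specifically for \emph{negated} literals. The containments above are one-directional — labels only grow from the strengthened $L_\eps$ to the plain $L$ — so a positive literal transfers immediately, whereas the \emph{absence} of a proposition does not transfer under a mere superset relation. Absorbing this is exactly the purpose of the bound $\eps\ge (M+\delta)\tau/2$: the strengthening must be read as eroding the relevant proposition regions by enough to cover the worst-case inter-sample excursion, so that a literal decided robustly on the sampled trace remains decided, with the correct truth value, all along the continuous trace. Making this precise for arbitrary $\ltlx$ formulas is the delicate step, and I would isolate it as a trace lemma in the spirit of the robustness argument of \cite{liu2017robust}; everything else in the proof is the routine trajectory–execution correspondence and the Gr\"onwall-type excursion estimate above.
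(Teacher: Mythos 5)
Your proposal is correct and takes essentially the same route as the paper's proof: the same trajectory--execution correspondence, the same speed bound $\abs{\x'}\le M+\delta$ giving $\abs{\x(t)-\x(i\tau)}\le (M+\delta)\tau/2$ on the midpoint intervals $J_i=[i\tau-\tau/2,\,i\tau+\tau/2]$, and the same structural induction on formulas in negation normal form, with the $\U$/$\R$ cases resolved by matching each continuous time to its nearest sample index. Two remarks: the negated-literal difficulty you isolate as the ``hard part'' is dissolved in the paper by its standing NNF convention (negations are absorbed into fresh atomic propositions, whose regions the $\eps$-strengthening then erodes --- exactly the reading you give), so no separate trace lemma is needed; and your appeal to finite variability and stuttering invariance is unnecessary --- and unjustified in general, since $L\circ\x$ need not change value only finitely often --- but harmless, because the interval-matching induction you also describe carries the argument on its own.
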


\begin{proof}
The proof can be found in the Appendix. 
\end{proof}

\subsection{Abstraction}

We define control abstraction of transition system that preserves realizability of temporal logic specifications. 

\begin{definition}\label{def:ra}\em 
Given two transition systems 
$$\T_1=(Q_1,A_1,R_1),$$
and 
$$\T_2=(Q_2,A_2,R_2),$$
a relation $\alpha\subset Q_1\times Q_2$ is said to be an \emph{abstraction} from $\T_1$ to $\T_2$, if the following conditions are satisfied: 
\begin{itemize}
\item[(i)]  for all $q_1\in Q_1$, there exists $q_2\in Q_2$ such that $(q_1,q_2)\in \alpha$ (i.e., $\alpha(q_1)\neq \emptyset$); 
\item[(ii)] for all $q_2\in Q_2$ and $a_2\in A_2$, there exists $a_1\in A_1$ 
such that 
\begin{equation}\label{eq:over}
\alpha(\text{Post}_{\T_1}(q_1,a_1))\subset \text{Post}_{\T_2}(q_2,a_2);
\end{equation} 
for all $q_1\in \alpha^{-1}(q_2)$. 
\end{itemize}
If such a relation $\alpha$ exists, we say that $\T_2$ \emph{abstracts} $\T_1$ and write $\T_1 \preceq_{\alpha} \T_2$ or simply $\T_1\preceq \T_2$. When both $Q_1$ and $Q_2$ are subsets of $\Real^n$, we say that $\alpha$ is of granularity $\eta>0$, if for every $q_2\in Q_2$, $\alpha^{-1}(q_2)\subset q_2 + \eta \B$. 
\end{definition}

The following proposition shows that the abstraction relation defined above is sound in the sense of preserving realization of temporal logic specifications. 

\begin{prop}[Soundness]\label{prop:soundness}
Consider transition systems $\T_1=(Q_1,A_1,R_1)$ and $\T_2=(Q_2,A_2,R_2)$ such that $\T_1 \preceq_{\alpha} \T_2$. Suppose that $Q_1$ and $Q_2$ are subsets of $X$. Let $L:\,X\ra 2^\Pi$ be a labelling function. Let $N$ be a positive integer. 
\begin{itemize}
\item Suppose that $\alpha$ is proposition preserving with respect to $L$ in the sense that $L(q_2)\subset L(q_1)$ for all $(q_1,q_2)\in \alpha$. Then $(\phi,L)$ is realizable for $\T_2$ implies that $(\phi,L)$ is realizable for $\T_1$. 
\item Suppose that $\alpha$ is of granularity $\eta>0$. Let $L_\eta$ be an $\eta$-strengthening of $L$. Then $(\phi,L_\eta)$ is realizable for $\T_2$ implies that $(\phi,L)$ is realizable for $\T_1$. 
\end{itemize}
Moreover, a dwell-time $N$ strategy for $\T_2$ can be implemented by a dwell-time $N$ strategy for $\T_1$. 
\end{prop}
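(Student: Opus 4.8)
The plan is to prove Proposition~\ref{prop:soundness} (Soundness) by showing that any control strategy $\mu_2$ realizing $(\phi,L)$ for $\T_2$ can be ``pulled back'' along the abstraction relation $\alpha$ to a control strategy $\mu_1$ for $\T_1$ realizing $(\phi,L)$, with the key invariant that every $\mu_1$-controlled path in $\T_1$ is tracked, state-by-state, by a $\mu_2$-controlled path in $\T_2$ that satisfies the corresponding atomic propositions at least as strongly. The central objects to construct are: (i) an action translation using condition~(ii) of Definition~\ref{def:ra}, and (ii) a simulation argument that inductively builds the tracking path in $\T_2$.

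First I would fix a $\mu_2$ that realizes $(\phi,L)$ for $\T_2$ and define $\mu_1$ on-the-fly along an execution of $\T_1$. Starting from any $q_0^1\in Q_1$, condition~(i) gives some $q_0^2\in\alpha(q_0^1)$. Inductively, suppose we have reached $q_i^1\in Q_1$ with a tracking $q_i^2\in Q_2$ such that $(q_i^1,q_i^2)\in\alpha$ and the $\T_2$-history $q_0^2,\dots,q_i^2$ is consistent with $\mu_2$. Let $a_i^2=\mu_2(q_0^2,\dots,q_i^2)$ be the action prescribed by $\mu_2$. Condition~(ii), applied at $q_i^2$ with action $a_i^2$, produces an action $a_i^1\in A_1$ valid for \emph{all} $q_1\in\alpha^{-1}(q_i^2)$, in particular for $q_i^1$, satisfying $\alpha(\text{Post}_{\T_1}(q_i^1,a_i^1))\subset\text{Post}_{\T_2}(q_i^2,a_i^2)$. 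I define $\mu_1(q_0^1,\dots,q_i^1):=a_i^1$. Now any $a_i^1$-successor $q_{i+1}^1$ of $q_i^1$ in $\T_1$ has some $q_{i+1}^2\in\alpha(q_{i+1}^1)$ (nonempty by condition~(i)), and the inclusion forces $q_{i+1}^2\in\text{Post}_{\T_2}(q_i^2,a_i^2)$, so $q_0^2,\dots,q_{i+1}^2$ remains a valid $\mu_2$-controlled path. This maintains the invariant and yields, for every $\mu_1$-controlled path $q_0^1,q_1^1,\dots$, an accompanying $\mu_2$-controlled path $q_0^2,q_1^2,\dots$ with $(q_i^1,q_i^2)\in\alpha$ for all $i$.

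Next I would transfer satisfaction of the formula. For the first bullet, proposition preservation gives $L(q_i^2)\subset L(q_i^1)$, so the label sequence of the $\T_1$-path pointwise contains that of the $\T_2$-path. Since $(\phi,L)$ holds along the $\T_2$-path and $\phi$ is an $\ltlx$ formula whose truth is monotone in the sense that adding propositions can only help satisfaction of the relevant fragment (here one argues at the level of which propositions hold at each position, using the stutter-insensitive semantics referenced by $\ltlx$), the $\T_1$-path also satisfies $(\phi,L)$. For the second bullet, granularity $\eta$ gives $\alpha^{-1}(q_i^2)\subset q_i^2+\eta\B$, hence $q_i^1\in q_i^2+\eta\B$; by the definition of the $\eta$-strengthening, $\pi\in L_\eta(q_i^2)$ implies $\pi\in L(y)$ for all $y\in q_i^2+\eta\B$, in particular $\pi\in L(q_i^1)$, so $L_\eta(q_i^2)\subset L(q_i^1)$ and the same monotonicity argument applies with $L_\eta$ in place of the labelling on $\T_2$. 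Finally, the dwell-time claim follows because the construction of $\mu_1$ only consults $\mu_2$ at the same indices; whenever $\mu_2$ repeats an action over a block of $N$ steps, condition~(ii) returns the same $a_i^1$ for each step of that block (the action depends only on $q_i^2$ and $a_i^2$, which are constant across the block), so $\mu_1$ inherits dwell time $N$.

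The main obstacle I anticipate is the precise formulation of the label-monotonicity step: one must argue that if at each position the $\T_1$-path satisfies \emph{at least} the propositions satisfied by the $\T_2$-path, then satisfaction of $\phi$ transfers in the correct direction. This is clean for positive/monotone formulas but requires care for general $\ltlx$, and it is exactly why the $\eps$- and $\eta$-strengthenings are introduced: strengthening the labelling function on the abstract side compensates for the approximation so that the concrete side, which sees a superset of propositions, still verifies the same formula. I would handle this by reducing to a pointwise comparison of the labelling sequences and invoking the monotonicity of the LTL semantics under the inclusion $L(q_i^2)\subset L(q_i^1)$ (resp.\ $L_\eta(q_i^2)\subset L(q_i^1)$), deferring the detailed $\ltlx$ semantic bookkeeping to the Appendix in the same manner as Proposition~\ref{prop:intersample}.
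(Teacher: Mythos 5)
Your simulation construction for the two bullet items is sound and is essentially the argument the paper relies on (the paper defers to the proof of Theorem 1 of \cite{liu2017robust}): pull $\mu_2$ back through $\alpha$ step by step, maintain a tracking $\mu_2$-controlled path with $(q_i^1,q_i^2)\in\alpha$, and transfer satisfaction using monotonicity of the discrete-time semantics under the pointwise inclusions $L(q_i^2)\subset L(q_i^1)$, respectively $L_\eta(q_i^2)\subset L(q_i^1)$. One cosmetic correction: the property doing the work there is not stutter insensitivity but the paper's convention that formulas are in negation normal form with negations absorbed into fresh atoms, which is exactly what makes satisfaction monotone in the labelling; the induction over $\vee,\wedge,\U,\R$ then goes through.

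The dwell-time step, however, contains a genuine error. You claim the refined action is constant across a block because ``$q_i^2$ and $a_i^2$ are constant across the block,'' but $q_i^2$ is \emph{not} constant: the tracking path advances at every step, and the witness action $a_1$ in condition (ii) of Definition \ref{def:ra} depends on the current abstract state $q_2$ as well as on $a_2$. So even when $\mu_2$ holds $a^2$ fixed for $N$ steps, your construction may be forced to switch concrete actions inside the block, and the resulting $\mu_1$ need not have dwell time $N$. This is not repairable for a general abstraction $\alpha$: take $\T_2$ with states $X,Y,Z,D$, a single action $c$, transitions $X\xrightarrow{c}Y\xrightarrow{c}Z$, $Z\xrightarrow{c}Z$, $D\xrightarrow{c}D$; take $\T_1$ with states $x,y,z,d$, actions $a,b$, transitions $x\xrightarrow{a}y$, $x\xrightarrow{b}d$, $y\xrightarrow{a}d$, $y\xrightarrow{b}z$, with $z,d$ absorbing under both actions, and $\alpha$ the obvious bijection. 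One checks $\T_1\preceq_\alpha\T_2$ (the witness action for $(X,c)$ is $a$, for $(Y,c)$ it is $b$). With atoms $p_1$ (true exactly at $x,X$), $\bar p_1$ (its complement), $p_2$ (true exactly at $z,Z$), the abstraction is proposition preserving, and for $\phi=\bar p_1\vee\Diamond p_2$ the unique strategy of $\T_2$ realizes $(\phi,L)$ and trivially has dwell time $2$; yet no dwell-time-$2$ strategy of $\T_1$ realizes $(\phi,L)$, since from $x$ both blocks $aa$ and $bb$ end in the trap $d$.

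The fix is to add (or to make explicit) the hypothesis that the action translation in condition (ii) can be chosen uniformly in $q_2$, i.e.\ there is a map $\beta:\,A_2\ra A_1$ such that $\alpha(\text{Post}_{\T_1}(q_1,\beta(a_2)))\subset \text{Post}_{\T_2}(q_2,a_2)$ for all $q_2\in Q_2$, $a_2\in A_2$ and all $q_1\in\alpha^{-1}(q_2)$. With such a $\beta$ your argument is correct verbatim, since $a_i^1=\beta(a_i^2)$ is then constant on blocks. This uniformity does hold for the abstractions actually constructed in Theorem \ref{thm:complete} --- the concrete input is chosen within $\mu/2$ of the abstract action, and conversely the abstract action is the grid centre of the cell containing the concrete input, both independently of the state --- which is why the uses of this proposition in Theorems \ref{thm:decidability} and \ref{thm:decidability3} remain valid even though the ``Moreover'' clause fails for arbitrary abstractions in the sense of Definition \ref{def:ra}.
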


\begin{proof}
The proof is similar to the proof of Theorem 1 in \cite{liu2017robust}. Additional consideration has to be given to the dwell-time requirement and the separate cases of proposition preserving and finite-granularity abstractions. 
\end{proof}

A strengthening of labelling function is needed here if the abstraction is not proposition preserving with the original labelling function. For instance, if a proposition is defined as a semialgebraic set
 of the form $\set{x\in X:\,g(x)\le 0}$, where $g$ is a polynomial function.   
 There is no guarantee that a grid-based partition will preserve this proposition. The second part of the proposition can be used, where a strengthening of labelling function is needed to account for this mismatch. 

\section{Robustly Complete Abstraction and Robust decidability}

In this section, we prove that sampled-data control for nonlinear system is robustly decidable. 

\subsection{Robustly complete abstraction}

The key technical result for proving robustly decidability of sampled-data control for nonlinear system is the following result on the possibility of constructing an arbitrarily accurate abstraction of the nonlinear system in the sense that for any $\delta_2>\delta_1\ge 0$, one can find a finite transition system $\T$ such that $\T$ abstracts $\S_{\delta_1}$ while $\S_{\delta_2}$ abstracts $\T$. Hence, realizability of a specification by $\S_{\delta_2}$ would imply realizability of the same specification by $\S_{\delta_1}$.

\begin{theorem}[Robust completeness]\label{thm:complete}
Given any $\delta_2>\delta_1\ge 0$, we can choose $\tau>0$ and compute a finite transition system $\T$ such that 
$$
\T_{\delta_1,\tau}  \preceq \T \preceq \T_{\delta_2,\tau}.
$$
\end{theorem}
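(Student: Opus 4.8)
The plan is to build $\T$ by gridding the compact sets $X$ and $U$ and defining its transitions from a validated over-approximation of the $\tau$-reachable sets of $\S_{\delta_1}$; the positive slack $\delta_2-\delta_1$ is exactly what will let the \emph{same} transition relation also be under-approximated by $\S_{\delta_2}$. First I would fix any sampling period $\tau>0$, and then choose a state granularity $\eta>0$ and an input granularity $\mu>0$ small enough that
$$
2L\eta + L\mu + \frac{4\eta}{\tau} \le \delta_2-\delta_1,
$$
which is possible for every $\tau$. I then take $Q$ to be a finite $\eta$-dense subset of $X$, $A$ a finite $\mu$-dense subset of $U$ (so $A\subset U$ and every abstract action is a genuine input), and declare $(q,a,q')\in R$ iff some point within $\eta$ of $q'$ is $\S_{\delta_1}$-reachable in time $\tau$ under the constant input $a$ from some point within $\eta$ of $q$. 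By compactness $\T=(Q,A,R)$ is finite, and $R$ is computable by any validated reachability routine, its over-approximation error being folded into the budget above.

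For the inclusion $\T_{\delta_1,\tau}\preceq\T$ I would use $\alpha_1(x)=\set{q\in Q:\ \abs{x-q}\le\eta}$. Condition (i) of Definition~\ref{def:ra} holds because $Q$ is $\eta$-dense. For condition (ii), given $q$ and an abstract action $a$, choose the concrete input $u=a\in U$; then for every $x$ with $\abs{x-q}\le\eta$, every $\S_{\delta_1}$-successor $y$ of $x$ under $a$, and every $q'\in\alpha_1(y)$ (so $\abs{y-q'}\le\eta$), the triple $(q,a,q')$ lies in $R$ directly by its definition. Hence $\alpha_1(\text{Post}_{\T_{\delta_1,\tau}}(x,u))\subset\text{Post}_{\T}(q,a)$, so this direction holds essentially by construction.

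The substantive direction is $\T\preceq\T_{\delta_2,\tau}$, for which I take $\alpha_2(q)=(q+\eta\B)\cap X$. Given a concrete state $x$ and input $v$, pick $a\in A$ with $\abs{a-v}\le\mu$. Fix $q\in\alpha_2^{-1}(x)$ (so $\abs{x-q}\le\eta$), a successor $q'\in\text{Post}_{\T}(q,a)$, and a target $z\in\alpha_2(q')$; I must show $z$ is genuinely $\S_{\delta_2}$-reachable from $x$ under $v$ in time $\tau$. Unwinding the definition of $R$ produces a witness $\S_{\delta_1}$-trajectory $\xi$ with $\xi(0)=x_0$, $\xi(\tau)=y$, $\abs{x-x_0}\le 2\eta$ and $\abs{z-y}\le 2\eta$. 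I then form the candidate $\zeta(s)=\xi(s)+w(s)$, where $w$ interpolates linearly from $x-x_0$ at $s=0$ to $z-y$ at $s=\tau$, so that $\zeta(0)=x$, $\zeta(\tau)=z$, $\abs{w(s)}\le 2\eta$ and $\abs{w'(s)}\le 4\eta/\tau$. Writing $\zeta'(s)=f(\zeta(s),v)+e(s)$ and using the two Lipschitz bounds gives
$$
\abs{e(s)}\le \delta_1 + 2L\eta + L\mu + \frac{4\eta}{\tau}\le \delta_2,
$$
so $\zeta$ is a bona fide trajectory of $\S_{\delta_2}$ and $z\in\text{Post}_{\T_{\delta_2,\tau}}(x,v)$. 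This establishes $\alpha_2(\text{Post}_{\T}(q,a))\subset\text{Post}_{\T_{\delta_2,\tau}}(x,v)$ and hence condition (ii).

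The main obstacle is precisely this last step: an abstract transition must be realizable not merely as a single $\S_{\delta_2}$-trajectory but for an \emph{entire} $\eta$-ball of targets $z$, so a plain Filippov shadowing estimate (which tracks only one trajectory) is insufficient. The linear interpolation $w$ is the device that upgrades trajectory shadowing to a ball-covering statement, and the whole argument hinges on closing $2L\eta+L\mu+4\eta/\tau\le\delta_2-\delta_1$ — which is why a strictly positive robustness gap is indispensable and why $\eta$ must be small relative to $\tau$. Two loose ends I would tidy: (a) guaranteeing that $\zeta$ remains in $X$, which the paper absorbs into the safety specification $\always X$ and the encoding of out-of-domain transitions; and (b) replacing exact reachable sets by the computed over-approximations, whose error I would add to the left-hand side of the budget inequality and offset by shrinking $\eta$ and $\mu$ further.
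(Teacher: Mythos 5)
Your proof is correct, and its overall architecture is the same as the paper's: grid $X$ and $U$, relate concrete and abstract states by proximity, obtain $\T_{\delta_1,\tau}\preceq\T$ essentially by construction of $R$, and prove $\T\preceq\T_{\delta_2,\tau}$ by linearly interpolating a witness trajectory so that endpoint mismatches of size $O(\eta)$, after being multiplied by roughly $L+1/\tau$, are absorbed into the slack $\delta_2-\delta_1$; your observation that single-trajectory shadowing is insufficient and that the interpolant is what covers a whole $\eta$-ball of targets is exactly the paper's device ($\z(\theta)=\x(\theta)+\frac{\theta}{\tau}[y'-x']$). The genuine difference is how $R$ is defined. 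You declare $(q,a,q')\in R$ via the (exact, later over-approximated) time-$\tau$ reachable set of $\S_{\delta_1}$, while the paper uses an explicit forward-Euler predictor $q+\tau f(q,a)$ inflated by a Gronwall error bound, \eqref{eq:relation}. This changes what the slack must beat: your budget $2L\eta+L\mu+4\eta/\tau\le\delta_2-\delta_1$ contains only grid-induced terms and is satisfiable for \emph{every} fixed $\tau>0$, so you obtain the abstraction for an arbitrary sampling period (which would even let one bypass the dwell-time machinery used in Theorem \ref{thm:decidability}); the paper's ball \eqref{eq:relation} must additionally contain the spread of the true $\delta_1$-perturbed reachable set, contributing terms such as $\frac{\delta_1}{L}(e^{L\tau}-1)$ and $\frac{M}{L}(e^{L\tau}-L\tau-1)$ that do not vanish as $\eta,\mu\ra 0$ for fixed $\tau$, which is precisely why the paper must choose $\tau$ small (its statement reads ``we can choose $\tau>0$''). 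What the paper buys with this is that $R$ is a closed-form, immediately computable inequality in $(f,L,M,\eta,\mu,\tau,\delta_1)$, needing no reachability oracle; what you buy is the stronger every-$\tau$ statement, but the entire computational content of the theorem is then hidden in the phrase ``computable by any validated reachability routine.'' To make that rigorous you must establish that time-$\tau$ reachable sets of the Lipschitz differential inclusion $\S_{\delta_1}$ admit over-approximations of arbitrarily small Hausdorff error (e.g.\ set-valued Euler schemes with internal step much finer than $\tau$) and thread that error through your budget, as you sketch in your loose end (b); the paper's remark following Theorem \ref{thm:complete} explicitly anticipates this generalization, so your route is a legitimate instantiation of that template rather than a flaw. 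Your loose end (a) (the interpolant $\zeta$ possibly exiting $X$) is shared by the paper's own interpolant and is handled there in the same way, via the $\always X$ safety encoding.
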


\begin{proof}
We construct $\T=(Q,A,R)$ as follows. Let $\eta>0$ and $\mu>0$ be parameters to be chosen. 
Let $Q$ consist of the centres of the grid cells in $[\Real^n]_{\eta}$ that have a non-empty intersection with $X$. Let $A$ consist of the centres of the grid cells in $[\Real^m]_{\mu}$ that have a non-empty intersection with $U$. Because $U$ and $X$ are compact sets, $Q$ and $A$ are both finite.
We define a relation $\alpha\subset X\times Q$ by $(x,q)\in \alpha$ if and only if $\abs{x-q}\le \frac{\eta}{2}$. Clearly, $\alpha^{-1}$ is a relation on $Q\times X$. Define $R\subset (Q,A,Q)$ by $(q,a,q')\in R$ if and only if  
\begin{align}
&\abs{q' - (q+\tau f(q,a))} \notag\\
&\qquad\le \frac{\eta}{2} + \frac{\eta}{2}e^{L\tau} + (\frac{\delta_1}{L}  + \frac{\mu}{2})(e^{L\tau}-1) \notag\\
&\qquad\qquad+\frac{M(e^{L\tau}-L\tau-1)}{L}.  \label{eq:relation}
\end{align}
We show that, if $\eta$, $\mu$, and $\tau$ are chosen sufficiently small, we have 
$$
\T_{\delta_1,\tau}  \preceq_{\alpha} \T \preceq_{\alpha^{-1}} \T_{\delta_2,\tau}.
$$
Condition (i) in Definition \ref{def:ra} is clearly satisfied by both $\alpha$ and $\alpha^{-1}$. 

We verify that condition (ii) holds for $\T_{\delta_1,\tau}  \preceq_{\alpha} \T$, that is, for $q\in Q$ and $a\in A$,  there exists $u\in U$ such that 
\begin{equation}\label{eq:post1}
\alpha(\text{Post}_{\T_{\delta_1,\tau}}(x,u))\subset \text{Post}_{\T}(q,a);
\end{equation}
for all $x\in \alpha^{-1}(q)$. Pick $u\in U$ with $\abs{u-a}\le\frac{\mu}{2}$. Given $x'\in \text{Post}_{\T_{\delta_1,\tau}}(x,u)$, there exists a trajectory $\x:[0,\tau]\ra X$ such that $\x(0)=x$, $\x(\tau)=x'$, and $\x'(s)\in f(\x(s),u)+\delta_1 B$ for all $s\in [0,\tau]$. Define $\x_\tau(t)=q+t f(q,u)$ for $t\in [0,\tau]$. We have
\begin{align}
&\abs{\x'(t)-\x'_\tau(t)} \notag\\
&\le \abs{f(\x(t),u)-f(q,a)} + \delta_1 \notag\\
&\le \abs{f(\x(t),u)-f(\x_\tau(t),u)} + \abs{f(\x_\tau(t),u)-f(q,u)}\notag\\
&\qquad\qquad+\abs{f(q,u)-f(q,a)}+ \delta_1\notag\\
&\le L\abs{\x(t)-\x_\tau(t)} + L\abs{\x_\tau(t)-q} +L\abs{u-a}+ \delta_1\notag\\
& \le L\abs{\x(t)-\x_\tau(t)} + LMt + \frac{L\mu}{2} + \delta_1,\quad t\in [0,\tau].\label{eq:deviate}
\end{align}
By Gronwall's inequality (see, e.g., \cite{aubin2012differential}), we have
\begin{align*}
&\abs{x'-(q+\tau f(q,u))}=\abs{\x(\tau)-\x_\tau(\tau)} \\
&\le \abs{x-q}e^{L\tau} + \int_0^\tau ( LM s + \frac{L\mu}{2} + \delta_1) e^{L(\tau-s)}ds\\
&\le \frac{\eta}{2}e^{L\tau} + (\frac{\delta_1}{L} + \frac{\mu}{2})(e^{L\tau}-1) +\frac{M(e^{L\tau}-L\tau-1)}{L}.
\end{align*}
By (\ref{eq:relation}), this shows $\alpha(x')\subset \text{Post}_{\T}(q,a)$. Hence (\ref{eq:post1}) holds. 

We next verify that condition (ii) holds for $\T \preceq_{\alpha^{-1}} \T_{\delta_2,\tau}$, that is, for $x\in X$ and $u\in U$, there exists $a\in A$ such that 
\begin{equation}\label{eq:post2}
\alpha^{-1}(\text{Post}_{\T}(q,a))\subset \text{Post}_{\T_{\delta_2,\tau}}(x,u);
\end{equation}
for all $q\in \alpha(x)$. Pick $a$ be the center of the grid cell in $[\Real^m]_{\mu}$ that contains $u$. Given $y'\in \alpha^{-1}(\text{Post}_{\T}(q,a))$, there exists $q'\in\text{Post}_{\T}(q,a)$ such that $\abs{y'-q'}\le\frac{\eta}{2}$. By the definition of $\text{Post}_{\T}(q,a)$, we have 
\begin{align*}
\abs{q'-(q+\tau f(q,a))}&\le  \frac{\eta}{2} + \frac{\eta}{2}e^{L\tau} + (\frac{\delta_1}{L} + \frac{\mu}{2})(e^{L\tau}-1)\\
& +\frac{M(e^{L\tau}-L\tau-1)}{L}.
\end{align*}
Consider the trajectory $\x:[0,\tau]\ra X$ such that $\x(0)=x$, $\x(\tau)=x'$, and $\x'(s)\in f(\x(s),u)$. By a similar argument as in (\ref{eq:deviate}), we can show 
\begin{align*}
\abs{x'-(q+\tau f(q,a))}&\le \frac{\eta}{2}e^{L\tau} + \frac{\mu}{2}(e^{L\tau}-1)\\
&\qquad +\frac{M(e^{L\tau}-L\tau-1)}{L}.
\end{align*}
Hence, by the triangle inequality, 
\begin{equation}\label{eq:tri1}
\abs{y'-x'}\le \eta+ \eta e^{L\tau} + (\frac{\delta_1}{L}+\mu)(e^{L\tau}-1) +\frac{2M(e^{L\tau}-L\tau-1)}{L}
\end{equation}
Define
$$
\z(\theta) = \x(\theta) + \frac{\theta}{\tau} [y'-x'],\quad \theta\in [0,\tau].
$$
Then $\z(0)=\x(0)=x$ and $\z(\tau)=y'$, and
\begin{align}\label{eq:z}
\z'(\theta) \in f(\x(\theta),u) + \frac{1}{\tau}[y'-x'].
\end{align}
Note that 
\begin{align}\label{eq:tri2}
|\z(\theta) - \x(\theta)| & = \big\vert \frac{\theta}{\tau} [y'-x'] \big\vert \le \abs{y'-x'},\quad \theta \in [0,\tau]. 
\end{align}
Since $0\le\delta_1<\delta_2$, we can choose $\tau$, $\mu$, $\eta$ sufficiently small such that
{\small
\begin{equation}\label{eq:margin}
[\eta+ \eta e^{L\tau} + (\frac{\delta_1}{L}+\mu)(e^{L\tau}-1) +\frac{2M(e^{L\tau}-L\tau-1)}{L}][L+\frac{1}{\tau}]<\delta_2. 
\end{equation}
}
To see this is possible, choose, e.g. $\eta=\tau^2$ and $\mu=\tau$, and note that the limit of the left-hand side as $\tau\ra 0$ is given by
$$
\lim_{\tau \ra 0}\delta_1 \frac{e^{L\tau-1}}{L\tau} = \delta_1. 
$$
It follows from (\ref{eq:tri1})--(\ref{eq:margin}) and Lipschitz continuity of $f$ that
$$
\z'(\theta) \in f(\z(\theta),u) + \delta_2 B. 
$$
Hence $y'\in \text{Post}_{\T_{\delta_2,\tau}}(x,u)$ and (\ref{eq:post2}) holds. 
\end{proof}

\begin{remark}
In the proof, we choose the simplest possible validated bounds on a one-step reachable set, i.e. a forward Euler scheme with an error bound. This suffices to prove the required convergence to show approximate completeness. With the template provided by the proof of Theorem \ref{thm:complete}, one can in fact use any accurate over-approximation of the one-step reachable set for $\S_{\delta_1}$ to replace (\ref{eq:relation}) for defining the transitions in $\T$ and then show that this over-approximation is contained in the actual one-step reachable set of $\S_{\delta_2}$. 
\end{remark}

\begin{remark}
Theorem \ref{thm:complete} (as well as the problem formulation in the paper) only considers sample-and-hold control strategies. To prove a similar result for a more general set of signals $\mathcal{U}$, one would need to prove that, for each granularity $\mu>0$, there exists a finite subset of signals $\mathcal{A}$ that can approximate the set of signals $\mathcal{U}$ to a precision $\mu$ in the sense that, for every $\u\in \U$, there exists $\mathbf{a}\in\mathcal{A}$ such that $\norm{u-\mathbf{a}}\le \mu$, where $\norm{\cdot}$ denote the maximum norm. 
\end{remark}

\subsection{Robust decidability}

The following theorem is an immediate consequence of Theorem \ref{thm:complete} and states that sampled-data control for nonlinear system is robustly decidable. 

\begin{theorem}[Robust decidability]\label{thm:decidability}
Given a temporal logic specification $\phi$, a sampling period $T>0$, any $\delta_2>\delta_1\ge 0$, and any $\eps>0$. Let $L:\, X\ra 2^{\Pi}$ be a labelling function and $L_\eps$ be an $\eps$-strengthening of $L$. Then there exists a decision procedure that determines one of the following: 
\begin{itemize}
\item there exists (and one can algorithmically construct) a sample-and-hold control strategy with sampling period $T$ such that $(\phi,L)$ is realizable for $\S_{\delta_1}$; or 
\item $(\phi,L_\eps)$ is not realizable for $\S_{\delta_2}$ with a sample-and-hold control strategy with sampling period $T$.  
\end{itemize}
\end{theorem}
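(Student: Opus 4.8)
The plan is to turn the finite sandwich of Theorem~\ref{thm:complete} into an algorithm by reducing the decision to a single finite-state synthesis problem and then transporting its answer downward and upward along Propositions~\ref{prop:finersample}--\ref{prop:soundness}. Since $T>0$ is fixed, I would first write $\tau=T/N$ and treat $N$ as a large integer to be chosen, so that Proposition~\ref{prop:finersample} lets me pass freely between a period-$T$ sample-and-hold strategy and a period-$\tau$, dwell-time-$N$ strategy (for both $\S_{\delta_1}$ and $\S_{\delta_2}$). With $N$ large, $\tau$ is as small as needed, so Theorem~\ref{thm:complete} furnishes a \emph{finite} $\T$ with $\T_{\delta_1,\tau}\preceq_\alpha \T\preceq_{\alpha^{-1}}\T_{\delta_2,\tau}$, where $\alpha$ has granularity $\eta$ (e.g. $\eta=\tau^2$, $\mu=\tau$ as in that proof). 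Writing $\eps_1=(M+\delta_1)\tau/2$ and $\eps_2=(M+\delta_2)\tau/2$, the decision procedure is: build $\T$, form the strengthened labelling $\Lambda:=L_{\eps_1+\eta}$ (which, by the additivity of ball-strengthenings underlying Proposition~\ref{prop:strengthening}, is the $\eta$-strengthening of the $\eps_1$-strengthening of $L$), and solve the finite dwell-time-$N$ realizability problem for $(\phi,\Lambda)$ on $\T$. After folding the dwell-time counter modulo $N$ into the state this is a standard $\omega$-regular/parity game over a finite graph, hence decidable, and it returns a controller when it succeeds.

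If $(\phi,\Lambda)$ is realizable for $\T$, I would chain downward. Soundness (Proposition~\ref{prop:soundness}, granularity part, with base labelling $L_{\eps_1}$) turns realizability of $(\phi,(L_{\eps_1})_\eta)$ on $\T$ into realizability of $(\phi,L_{\eps_1})$ for $\T_{\delta_1,\tau}$; inter-sample correctness (Proposition~\ref{prop:intersample}(1), legitimate because $\eps_1=(M+\delta_1)\tau/2$) upgrades this to realizability of $(\phi,L)$ for $\S_{\delta_1}$ with period $\tau$ and dwell time $N$; and Proposition~\ref{prop:finersample}(2) rewrites the period-$\tau$, dwell-$N$ strategy as a period-$T$ sample-and-hold strategy. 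Each step is constructive and preserves the dwell-time structure, so the synthesized controller descends to an implementable strategy for $\S_{\delta_1}$, giving the first alternative.

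For the second alternative I would argue the contrapositive: assuming $(\phi,L_\eps)$ is realizable for $\S_{\delta_2}$ with period $T$, I climb the same ladder upward to contradict the failure of the synthesis. Proposition~\ref{prop:finersample}(1) yields a period-$\tau$, dwell-$N$ strategy for $\S_{\delta_2}$; Proposition~\ref{prop:intersample}(2) (with $\delta=\delta_2$) pulls realizability back to $\T_{\delta_2,\tau}$; and soundness applied to $\T\preceq_{\alpha^{-1}}\T_{\delta_2,\tau}$ pulls it back to $\T$. The key is the strengthening budget: the margin I am given on the way up is $\eps$, whereas soundness only spends $\eta$ and Proposition~\ref{prop:intersample}(2) admits \emph{any} strengthening $p\ge\eps_2$. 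Choosing $N$ so that $\eps\ge\eps_1+\eps_2+2\eta$ (possible since the right-hand side tends to $0$ as $\tau\to0$) and spending exactly $p=\eps-\eps_1-2\eta\ge\eps_2$ in Proposition~\ref{prop:intersample}(2), the additive composition of ball-strengthenings makes the labelling that reaches $\T$ equal to $\Lambda=L_{\eps_1+\eta}$. Thus $(\phi,\Lambda)$ would be realizable for $\T$, contradicting the failed synthesis.

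The hard part is precisely this accounting. I must pick the discretization parameters so that one and the same finite synthesis against $\Lambda$ simultaneously certifies realizability for $\S_{\delta_1}$ (reading $\Lambda$ as the $\eta$-strengthening of the $\eps_1$-strengthening of $L$) and non-realizability for $\S_{\delta_2}$ at the prescribed margin $\eps$ (reading $\Lambda$ as what survives after spending $p\ge\eps_2$ and $\eta$ on the climb from $L_\eps$). Forcing these two readings of $\Lambda$ to coincide is what imposes $\eps\ge\eps_1+\eps_2+2\eta$ and what makes essential use of the exact additivity of ball-strengthenings in Proposition~\ref{prop:strengthening}, together with the dwell-time/sampling-period interchange of Proposition~\ref{prop:finersample}; once these line up, no further analytic estimate is needed and the theorem follows immediately from Theorem~\ref{thm:complete} and the decidability of finite synthesis.
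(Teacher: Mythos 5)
Your proposal is correct and takes essentially the same route as the paper's own proof: the same sandwich $\T_{\delta_1,\tau}\preceq\T\preceq\T_{\delta_2,\tau}$ from Theorem~\ref{thm:complete}, the same ladder of Propositions~\ref{prop:strengthening}--\ref{prop:soundness} traversed downward when the finite dwell-time-$N$ synthesis on $\T$ succeeds and upward (in contrapositive form) when it fails, with the decision itself reduced to that finite synthesis against a strengthened labelling. The only deviations are bookkeeping: the paper budgets $\eps_1+\eps_2\le\eps$ (adding $\eta/2$ to each $\eps_i$ in the non-proposition-preserving case, i.e.\ $\eps_1+\eps_2+\eta\le\eps$) where you budget $\eps_1+\eps_2+2\eta\le\eps$, and your appeal to \emph{exact} additivity of ball-strengthenings is unnecessary since the one-sided inclusion of Proposition~\ref{prop:strengthening} suffices at every step --- both points are immaterial because your budget still vanishes as $\tau\to 0$.
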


\begin{proof}
Suppose that $(\phi,L_\eps)$ is realizable for $\S_{\delta_2}$ with a sampled-data control strategy with sampling period $T$.   
Let $N$ be a positive integer and $\tau= \frac{T}{N}$. Let $\eps_1=\frac{(M+\delta_1)\tau}{2}$ and $\eps_2=\frac{(M+\delta_2)\tau}{2}$. Choose $\tau$ sufficiently small such that 
\begin{equation}\label{eq:margin2}
\frac{(2M+\delta_1+\delta_2)\tau}{2} = \eps_1+\eps_2 \le \eps.
\end{equation}
Let $L_{\eps_1}$ be the $\eps_1$-strengthening of $L$. Let $(L_{\eps_1})_{\eps_2}$ be the $\eps_2$-strengthening of $L_{\eps_1}$. Let $L_{\eps_1+\eps_2}$ be the $(\eps_1+\eps_2)$-strengthening of $L$. By the definition of strengthening a labeling function and Proposition \ref{prop:strengthening}, we have $L_\eps(x)\subset L_{\eps_1+\eps_2}(x) \subset (L_{\eps_1})_{\eps_2}(x)$ for all $x\in X$. Hence, by the semantics of $\ltlx$, $(\phi,(L_{\eps_1})_{\eps_2}(x))$ is realizable for $\S_{\delta_2}$ with a sampled-data control strategy with sampling period $T$.

By Proposition \ref{prop:finersample}, $(\phi,(L_{\eps_1})_{\eps_2})$ is realizable for $\S_{\delta_2}$ with a sampled-data control strategy with sampling period $\tau$ and dwell-time $N$. By Proposition \ref{prop:intersample}, $(\phi,L_{\eps_1})$ is realizable for $\T_{\delta_2,\tau}$ with a dwell-time $N$ control strategy, because $\eps_2\ge\frac{(M+\delta_2)\tau}{2}$ (indeed equal).   
Construct $\T$ by Theorem \ref{thm:complete} so that  
$$
\T_{\delta_1,\tau}  \preceq \T \preceq \T_{\delta_2,\tau}.
$$
By Theorem \ref{prop:soundness}, $(\phi,L_{\eps_1})$ is realizable for $\T$ and hence also for $\T_{\delta_1,\tau}$ with a dwell-time $N$ control strategy.  By Proposition \ref{prop:intersample}, $(\phi,L)$ is realizable for $\S_{\delta_1}$ with a sampled-data control strategy with sampling period $\tau$ and dwell-time $N$, because $\eps_1\ge\frac{(M+\delta_1)\tau}{2}$. Finally, by Proposition \ref{prop:finersample} again, $(\phi,L)$ is realizable for $\S_{\delta_1}$ with a sampled-data control strategy with sampling period $T$. One can algorithmically construct such a control strategy by synthesizing a dwell-time $N$ controller  strategy for the finite transition system $\T$.
For the case there is not necessarily a proposition preserving partition, we can choose $\eps_1=\frac{(M+\delta_1)\tau+\eta}{2}$ and $\eps_2=\frac{(M+\delta_2)\tau+\eta}{2}$ to account for mismatch by an abstraction with granularity $\eta$. In this case, we can choose $\tau$ and $\eta$ sufficiently small such that 
\begin{equation}\label{eq:margin3}
\eta + \frac{(2M+\delta_1+\delta_2)\tau}{2} = \eps_1+\eps_2 \le \eps.
\end{equation}
\end{proof}

A decision diagram summarizing the argument in the proof of Theorem \ref{thm:decidability} is shown in Figure \ref{fig:decision}.

\begin{figure}
\tikzstyle{decision} = [diamond, draw, fill=blue!20, 
    text width=4.5em, text badly centered, node distance=3cm, inner sep=0pt]
\tikzstyle{block} = [rectangle, draw, fill=blue!20, 
    text width=5em, text centered, rounded corners, minimum height=4em]
\tikzstyle{greenblock} = [rectangle, draw, fill=green!20, 
    text width=5em, text centered, rounded corners, minimum height=4em]
\tikzstyle{longblock} = [rectangle, draw, fill=green!20, 
    text width=8em, text centered, rounded corners, minimum height=4em]
\tikzstyle{line} = [draw, -latex']
\tikzstyle{cloud} = [draw, ellipse,fill=red!20, text width=5em, text centered, node distance=3cm,
    minimum height=2em]

\centering    

\vskip 1cm
\begin{tikzpicture}[node distance = 2cm, auto,thick,scale=0.7, every node/.style={scale=0.7}]
    \node [longblock] (abs) {compute $\T$ s.t. 
    $$\T_{\delta_1,\tau}  \preceq \T \preceq \T_{\delta_2,\tau}$$
    (Theorem \ref{thm:complete})};
    \node [cloud, right of=abs, node distance=3.5cm] (system) {System $\S$:\\ $(f,M,L)$};
    \node [cloud, left of=abs, node distance=3.5cm] (robust) {Parameters: \\$(\delta_1,\delta_2,\eps)$};
    \node [longblock, below of=abs] (synthesis) {robust decidability (Theorem \ref{thm:decidability})};       
    \node [cloud, above of=abs, node distance=2cm] (spec) {Specification:\\ $(\phi,L)$};
    \node [decision, below of=synthesis, node distance=2.5cm] (evaluate) {$(\phi,L_{\eps_1})$ realizable for $T$?};
    \node [longblock, left of=evaluate, node distance=3.5cm] (update) { $(\phi,L_\eps)$ not realizable for $S_{\delta_2}$ };  
    \node [longblock, right of=evaluate, node distance=3.5cm] (controller) {$(\phi,L)$ realizable for $S_{\delta_1}$ and a robust controller found};
    \path [line] (abs) -- (synthesis);
    \path [line] (synthesis) -- (evaluate);
    \path [line] (evaluate) -- node {yes}(controller);
    \path [line] (evaluate) -- node {no}(update);
    \path [line] (spec) -- (abs);
    \path [line] (system) -- (abs);
    \path [line] (robust) -- (abs);
\end{tikzpicture}
\caption{A decision diagram for checking robust realizability: given a system $S$, a temporal logic specification $\phi$ with a labelling function $L$, and parameters $\delta_2>\delta_1\ge 0$ and $\eps>0$, we can decide either $(\phi,L)$ is realizable for $\S_{\delta_1}$ with a robust controller, or $(\phi,L_\eps)$ is not realizable for $\S_{\delta_2}$. This is done by checking realizability of $(\phi,L_{\eps_1})$ on $\T$, where $\T$ is a sufficiently precise abstraction of $\S_{\delta_1}$ constructed as in the proof of Theorem \ref{thm:complete} by choosing the discretization parameters $\eta,\mu,\tau$ sufficiently small according to (\ref{eq:margin}), and $\eps_1$ is chosen according to (\ref{eq:margin2}) in the proof of Theorem \ref{thm:decidability}.} \label{fig:decision}
\end{figure}

When there is no \emph{a priori} fixed sampling period for the decision process, we can formulate the robust decidability theorem as follows, where it is proved that the problem can be solved for all sufficiently small sampling periods. The proof follows exactly from the proof of Theorem \ref{thm:decidability} with $N=1$. 

\begin{theorem}[Robust decidability II]\label{thm:decidability2}
Given a temporal logic specification $\phi$, any $\delta_2>\delta_1\ge 0$, and any $\eps>0$. Let $L:\, X\ra 2^{\Pi}$ be a labelling function and $L_\eps$ be an $\eps$-strengthening of $L$. Then there exists some $\tau^*>0$ (and one can explicitly compute it) such that, for each $\tau\in (0,\tau^*]$, there exists a decision procedure that determines one of the following: 
\begin{itemize}
\item there exists (and one can algorithmically construct) a sample-and-hold control strategy with sampling period $\tau$ such that $(\phi,L)$ is realizable for $\S_{\delta_1}$; or 
\item $(\phi,L_\eps)$ is not realizable for $\S_{\delta_2}$ with a sample-and-hold control strategy with sampling period $\tau$.  
\end{itemize}
\end{theorem}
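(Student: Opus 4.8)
The plan is to replay the proof of Theorem~\ref{thm:decidability} essentially verbatim, but with dwell-time $N=1$, so that the sampling period $\tau$ is no longer tied to a pre-fixed $T$ through $\tau=T/N$; instead $\tau$ itself becomes the free parameter and $T=\tau$. With $N=1$ the dwell-time bookkeeping disappears and Proposition~\ref{prop:finersample} (which merely reconciles $T$ with $N\tau$) becomes vacuous, so the only chain of implications that survives is the one through Proposition~\ref{prop:strengthening}, Proposition~\ref{prop:intersample}, Proposition~\ref{prop:soundness}, and Theorem~\ref{thm:complete}. The one genuinely new ingredient, compared with the fixed-$T$ statement, is to extract a \emph{single} computable threshold $\tau^*$ that makes the whole construction go through for \emph{every} $\tau\in(0,\tau^*]$ at once, rather than merely asserting smallness in a limiting sense.

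First I would fix the discretization schedule used in Theorem~\ref{thm:complete}, say $\eta=\tau^2$ and $\mu=\tau$, and set $\eps_1=(M+\delta_1)\tau/2$ and $\eps_2=(M+\delta_2)\tau/2$ in the proposition-preserving case (adding the $\eta/2$ correction as in~\eqref{eq:margin3} otherwise). I then impose two smallness requirements on $\tau$. The first is $\eps_1+\eps_2=(2M+\delta_1+\delta_2)\tau/2\le\eps$, an explicit linear bound $\tau\le\tau_1$. The second is the margin inequality~\eqref{eq:margin} that Theorem~\ref{thm:complete} needs in order to return a valid abstraction; substituting $\eta=\tau^2$ and $\mu=\tau$ turns its left-hand side into a continuous function of $\tau$ whose limit as $\tau\ra 0$ equals $\delta_1$. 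Because $\delta_1<\delta_2$ strictly, continuity furnishes an entire interval $(0,\tau_2]$ on which~\eqref{eq:margin} holds. Taking $\tau^*=\min(\tau_1,\tau_2)$, both of which are computable, provides the required uniform threshold.

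With $\tau\le\tau^*$ fixed, the decision procedure is to build the finite transition system $\T$ of Theorem~\ref{thm:complete}, so that $\T_{\delta_1,\tau}\preceq\T\preceq\T_{\delta_2,\tau}$, and then to decide, by solving the finite-state $\ltlx$ realizability problem on $\T$ (a finite game, hence algorithmic), whether $(\phi,L_{\eps_1})$ is realizable for $\T$. If it is, Proposition~\ref{prop:soundness} applied to $\T_{\delta_1,\tau}\preceq\T$ pushes realizability of $(\phi,L_{\eps_1})$ down to $\T_{\delta_1,\tau}$, and Proposition~\ref{prop:intersample} (applicable since $\eps_1=(M+\delta_1)\tau/2$) then yields a sample-and-hold strategy of period $\tau$ realizing $(\phi,L)$ for $\S_{\delta_1}$, which one reads off from the winning strategy on $\T$; this is the first alternative. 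If instead $(\phi,L_{\eps_1})$ is not realizable for $\T$, I would establish the second alternative by contraposition: were $(\phi,L_\eps)$ realizable for $\S_{\delta_2}$ with period $\tau$, then since $L_\eps(x)\subset(L_{\eps_1})_{\eps_2}(x)$ by Proposition~\ref{prop:strengthening}, $(\phi,(L_{\eps_1})_{\eps_2})$ would be realizable for $\S_{\delta_2}$; Proposition~\ref{prop:intersample} would give $(\phi,L_{\eps_1})$ realizable for $\T_{\delta_2,\tau}$, and Proposition~\ref{prop:soundness} applied to $\T\preceq\T_{\delta_2,\tau}$ would make $(\phi,L_{\eps_1})$ realizable for $\T$, contradicting the negative answer.

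The invocations of the three propositions and of Theorem~\ref{thm:complete} are immediate once $N=1$ trivializes the dwell-time arguments, so the main obstacle lies entirely in the second paragraph: showing that~\eqref{eq:margin} holds on a full interval $(0,\tau^*]$ rather than only ``for sufficiently small $\tau$.'' This rests squarely on the strict gap $\delta_1<\delta_2$ together with continuity of the left-hand side of~\eqref{eq:margin} in $\tau$ under the choice $\eta=\tau^2$, $\mu=\tau$; without the strict inequality the limiting value $\delta_1$ could coincide with $\delta_2$ and no such interval would be guaranteed.
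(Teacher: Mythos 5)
Your proposal is correct and takes essentially the same route as the paper: the paper's own proof of Theorem~\ref{thm:decidability2} consists precisely of rerunning the proof of Theorem~\ref{thm:decidability} with $N=1$, which is exactly your plan (Proposition~\ref{prop:finersample} trivializes, and the chain through Propositions~\ref{prop:strengthening}, \ref{prop:intersample}, \ref{prop:soundness} and Theorem~\ref{thm:complete} carries the argument). Your second paragraph, extracting a single computable threshold $\tau^*$ from (\ref{eq:margin}) and (\ref{eq:margin2}) via continuity and the strict gap $\delta_1<\delta_2$, merely makes explicit what the paper leaves implicit, and is consistent with it.
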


Another version of robust decidability can be formulated as follows, which says that with one procedure, one can decide robust realizability by a sample-and-hold control strategy with \emph{any} sampling period greater than a threshold value (e.g. a lower bound limited by the physical sampling frequency). 

\begin{theorem}[Robust decidability III]\label{thm:decidability3}
Given a temporal logic specification $\phi$, any $\delta_2>\delta_1\ge 0$, $\eps>0$, and $\tau^*>0$. Let $L:\, X\ra 2^{\Pi}$ be a labelling function and $L_\eps$ be an $\eps$-strengthening of $L$. Then there exists some $\tau>0$ (and one can explicitly compute it) and a decision procedure that determines one of the following: 
\begin{itemize}
\item there exists (and one can algorithmically construct) a sample-and-hold control strategy with sampling period $\tau$ such that $(\phi,L)$ is realizable for $\S_{\delta_1}$; or 
\item $(\phi,L_\eps)$ is not realizable for $\S_{\delta_2}$ with a sample-and-hold control strategy with a sampling period $T\ge \tau^*$.  
\end{itemize}
\end{theorem}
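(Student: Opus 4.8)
The plan is to reduce Theorem~\ref{thm:decidability3} to a single application of the robustly complete abstraction of Theorem~\ref{thm:complete}, in the same spirit as Theorem~\ref{thm:decidability}, but with the base sampling period $\tau$ chosen once and for all so that the \emph{same} finite abstraction $\T$ certifies realizability (or its failure) simultaneously across the whole family of admissible periods $T\ge\tau^*$. Concretely, I would set $\eps_1=\tfrac{(M+\delta_1)\tau+\eta}{2}$ and $\eps_2=\tfrac{(M+\delta_2)\tau+\eta}{2}$ and then choose the discretization parameters $\eta,\mu,\tau$ small enough that both the margin condition~\eqref{eq:margin} of Theorem~\ref{thm:complete} and the budget $\eta+\tfrac{(2M+\delta_1+\delta_2)\tau}{2}=\eps_1+\eps_2\le\eps$ of~\eqref{eq:margin3} hold; this $\tau$ is explicitly computable from $(f,M,L,\delta_1,\delta_2,\eps)$. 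With these parameters I would compute the finite transition system $\T$ of Theorem~\ref{thm:complete}, so that $\T_{\delta_1,\tau}\preceq\T\preceq\T_{\delta_2,\tau}$, and let the decision procedure test (a finite game-solving step) whether $(\phi,L_{\eps_1})$ is realizable for $\T$, returning the first alternative if the test succeeds and the second if it fails.

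The soundness of the first alternative would follow exactly as in Theorem~\ref{thm:decidability}: realizability of $(\phi,L_{\eps_1})$ for $\T$ yields, by Proposition~\ref{prop:soundness}, realizability for $\T_{\delta_1,\tau}$, and Proposition~\ref{prop:intersample} (using $\eps_1\ge\tfrac{(M+\delta_1)\tau}{2}$) then produces an explicitly synthesizable sample-and-hold strategy of period $\tau$ realizing $(\phi,L)$ for $\S_{\delta_1}$. The heart of the theorem is the soundness of the second alternative, which I would prove by contraposition: assuming $(\phi,L_\eps)$ \emph{is} realizable for $\S_{\delta_2}$ by some period $T\ge\tau^*$, I must exhibit a winning strategy for $(\phi,L_{\eps_1})$ on $\T$, contradicting the failed test. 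Writing $T=N\tau$ with $N$ a positive integer, Proposition~\ref{prop:finersample} converts the period-$T$ strategy into a period-$\tau$ strategy with dwell time $N$; Proposition~\ref{prop:strengthening} gives $L_\eps\subset L_{\eps_1+\eps_2}\subset(L_{\eps_1})_{\eps_2}$; Proposition~\ref{prop:intersample} in the converse direction (using $\eps_2\ge\tfrac{(M+\delta_2)\tau}{2}$) transfers realizability of $(\phi,L_{\eps_1})$ to $\T_{\delta_2,\tau}$ with dwell time $N$; and finally $\T\preceq\T_{\delta_2,\tau}$ together with Proposition~\ref{prop:soundness}, with the same strengthening bookkeeping as in Theorem~\ref{thm:decidability} (the budget~\eqref{eq:margin3} absorbing the granularity $\eta$), pulls it back to $\T$. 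Since realizability on $\T$ is not restricted to any dwell time, this contradicts the negative test and establishes the claim for that $T$.

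The decisive point, and the reason the dwell-time device was introduced, is \emph{uniformity in} $T$: the inter-sample error that $\eps_2$ must absorb is incurred over a single base step of length $\tau$, so it equals $\tfrac{(M+\delta_2)\tau}{2}$ regardless of how large $N=T/\tau$ is, and hence one budget $\eps_1+\eps_2\le\eps$ and one abstraction $\T$ serve every admissible period at once. The main obstacle I anticipate is precisely the alignment $T=N\tau$: both Proposition~\ref{prop:finersample} and the transitions of $\T_{\delta_2,\tau}$ require the control to switch on the $\tau$-grid, so each admissible period must be realized as an integer multiple of the computed base period $\tau$. This is what forces $\tau$ to be taken as a common sub-period rather than equal to $\tau^*$, and it is exactly the feature that distinguishes Theorem~\ref{thm:decidability3} (one procedure covering all $T\ge\tau^*$) from Theorem~\ref{thm:decidability2} (one procedure per period, obtained with $N=1$). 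The remaining estimates are the Gronwall-type and triangle-inequality bounds already carried out in Theorems~\ref{thm:complete} and~\ref{thm:decidability}, which I would not repeat.
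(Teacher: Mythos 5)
Your positive direction and the $\eps_1,\eps_2$ bookkeeping via Proposition \ref{prop:strengthening} are fine, but the contrapositive direction has a genuine gap at its very first step: you write ``$T=N\tau$ with $N$ a positive integer.'' The second alternative of Theorem \ref{thm:decidability3} quantifies over \emph{all real} periods $T\ge\tau^*$, while your $\tau$ is a single number fixed in advance by the procedure; for all but countably many $T\ge\tau^*$ no such integer $N$ exists, and no choice of $\tau>0$ can be a ``common sub-period'' of every real $T\ge\tau^*$. When $T\notin\tau\Z$, a period-$T$ sample-and-hold strategy does not induce \emph{any} period-$\tau$ strategy, with or without dwell time: its switching instants $jT$ fall strictly inside $\tau$-cells, so the held input is not constant on the $\tau$-grid and Proposition \ref{prop:finersample} simply does not apply. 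Hence the chain from $\S_{\delta_2}$ to $\T_{\delta_2,\tau}$ to $\T$ collapses, and a failed test on your $\T$ does not exclude realizability of $(\phi,L_\eps)$ for $\S_{\delta_2}$ at, say, $T=\tau^*\sqrt{2}$. You flag this alignment issue yourself as ``the main obstacle,'' but the remedy you propose---choosing $\tau$ as a common sub-period of all admissible periods---is exactly what cannot be done.

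What closes the gap in the paper is an ingredient your proposal never uses: Lemma \ref{lem:tau}, which shows by the linear time-reparametrization $\z(s)=\x(\tfrac{T}{T+r}s)$ that $\T_{\delta_1,T+r}\preceq_{\text{id}_{X}}\T_{\delta_3,T}$ whenever $\abs{r}$ is small relative to $\tau^*$ and $\delta_3>\delta_1$; that is, a small mismatch in sampling period can be traded for a small increase in the disturbance level. This is also why the paper introduces an intermediate level $\delta_1<\delta_3<\delta_2$ and sandwiches the abstraction as $\T_{\delta_3,T}\preceq\T\preceq\T_{\delta_2,T}$: the gap between $\delta_1$ and $\delta_3$ is the budget reserved for rounding the unknown period. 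Concretely, the paper first uses the dwell-time reduction of Proposition \ref{prop:finersample} (the one part of your uniformity argument that is correct) to bring $T$ into the window $(\tau^*/2,\tau^*]$, then passes realizability of $(\phi,L_{\eps_1})$ from $\T_{\delta_2,T}$ (obtained via Proposition \ref{prop:intersample}) through $\T$ and $\T_{\delta_3,T}$ down to $\T_{\delta_1,m\tau}$, where $m$ is the largest integer with $m\tau\le T$ and this last step uses Lemma \ref{lem:tau} with $r=m\tau-T$ together with Proposition \ref{prop:soundness}; Propositions \ref{prop:intersample} and \ref{prop:finersample} then yield a controller for $\S_{\delta_1}$ with period $m\tau$, hence with period $\tau$ and dwell time $m$. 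In your version the entire disturbance interval $[\delta_1,\delta_2]$ is spent on the abstraction sandwich $\T_{\delta_1,\tau}\preceq\T\preceq\T_{\delta_2,\tau}$, leaving no margin with which to absorb the period mismatch, so the argument cannot be repaired without restructuring it around $\delta_3$ and Lemma \ref{lem:tau}.
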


To prove Theorem \ref{thm:decidability3}, we need the following lemma, which shows that, if $\delta_2>\delta_1$, then system $\T_{\delta_1,\tau}$ can be abstracted by $\T_{\delta_2,\tau'}$ despite a slight mismatch between the sampling periods $\tau$ and $\tau'$.

\begin{lemma}\label{lem:tau}
Given any $\tau^*>0$ and $\delta_2>\delta_1\ge 0$, there exists $r^*>0$ such that
$$
\T_{\delta_1,T}\preceq_{\text{id}_{X}} \T_{\delta_2,T+r}.
$$
for all $T\ge \tau^*$ and all $\abs{r}\le r^*$, where $\text{id}_{X}\subset X\times X$ is the identity relation.  
\end{lemma}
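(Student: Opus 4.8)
The plan is to verify the two conditions of the abstraction relation (Definition \ref{def:ra}) directly with $\alpha=\text{id}_X$. Condition (i) is immediate, since $\text{id}_X(x)=\set{x}\neq\emptyset$ for every $x\in X$. For condition (ii), note that $\text{id}_X^{-1}(x)=\set{x}$, so it suffices to show that for every $x\in X$ and every $u\in U$ (viewed as an action of $\T_{\delta_2,T+r}$), choosing the \emph{same} input $v=u$ as an action of $\T_{\delta_1,T}$ yields
$$\text{Post}_{\T_{\delta_1,T}}(x,u)\subset \text{Post}_{\T_{\delta_2,T+r}}(x,u).$$

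First I would fix $x'\in\text{Post}_{\T_{\delta_1,T}}(x,u)$, so there is a trajectory $\x:[0,T]\ra X$ with $\x(0)=x$, $\x(T)=x'$, and $\x'(s)\in f(\x(s),u)+\delta_1\B$ for a.e.\ $s$. The key step is an affine time reparametrization: define $\y(\theta)=\x(\tfrac{T}{T+r}\theta)$ for $\theta\in[0,T+r]$. Then $\y$ is absolutely continuous, stays in $X$ (it traces the same path as $\x$), and satisfies $\y(0)=x$, $\y(T+r)=x'$. By the chain rule, for a.e.\ $\theta$,
$$\y'(\theta)=\tfrac{T}{T+r}\,\x'(\tfrac{T}{T+r}\theta)\in \tfrac{T}{T+r}\big(f(\y(\theta),u)+\delta_1\B\big).$$

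Next I would estimate how far $\y'(\theta)$ lies from $f(\y(\theta),u)$. Writing the deviation as $\big(\tfrac{T}{T+r}-1\big)f(\y(\theta),u)+\tfrac{T}{T+r}\delta_1 b$ with $\abs{b}\le 1$, and using $\abs{f(\y(\theta),u)}\le M$ together with $\abs{\tfrac{T}{T+r}-1}=\tfrac{\abs{r}}{T+r}$, I obtain
$$\abs{\y'(\theta)-f(\y(\theta),u)}\le \frac{\abs{r}}{T+r}\,M+\frac{T}{T+r}\,\delta_1.$$
For $\abs{r}\le r^*$ and $T\ge\tau^*$ (with $r^*<\tau^*$, so $T+r\ge\tau^*-r^*>0$), both $\tfrac{\abs{r}}{T+r}$ and $\tfrac{T}{T+r}-1$ are bounded by $\tfrac{r^*}{\tau^*-r^*}$, which tends to $0$ as $r^*\ra 0$. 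Hence the right-hand side converges to $\delta_1$ \emph{uniformly} over $T\ge\tau^*$ and $\abs{r}\le r^*$. Since $\delta_1<\delta_2$, one can fix $r^*>0$ small enough that this bound stays strictly below $\delta_2$, giving $\y'(\theta)\in f(\y(\theta),u)+\delta_2\B$ for a.e.\ $\theta$. Thus $\y$ witnesses $x'\in\text{Post}_{\T_{\delta_2,T+r}}(x,u)$, establishing condition (ii).

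I do not expect a genuine obstacle here; the one point to watch is the uniformity of the estimate across \emph{all} $T\ge\tau^*$, which is exactly what forces an explicit $r^*$ depending only on $\tau^*$, $M$, $\delta_1$, $\delta_2$, and what makes the lower bound $\tau^*$ on the sampling period indispensable, since the bound $\tfrac{\abs{r}}{T+r}M$ degrades as $T\ra 0$. The affine time reparametrization is the same device used in the $\z(\theta)$ construction in the proof of Theorem \ref{thm:complete}, here deployed to match endpoints across mismatched horizons rather than to absorb a spatial defect.
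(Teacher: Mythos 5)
Your proof is correct and follows essentially the same route as the paper's: an affine time reparametrization of the $\T_{\delta_1,T}$ trajectory onto $[0,T+r]$, a uniform bound on the extra disturbance introduced by rescaling (using $T+r\ge \tau^*-r^*$), and the gap $\delta_1<\delta_2$ to fix $r^*$. If anything, your estimate is slightly more careful than the paper's, which tacitly uses $\frac{T}{T+r}\delta_1\B\subset\delta_1\B$ (valid only for $r\ge 0$) and drops the factor $M$ from the term bounding $\frac{\abs{r}}{T+r}\abs{f(\z(s),u)}$, whereas your version handles both signs of $r$ and keeps $M$ explicit.
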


\begin{proof}
Choose any $x\in X$ and $u\in U$. Let $x_1\in \text{Post}_{\T_{\delta_1,T}}(x,u)$. We show that $x_1\in \text{Post}_{\T_{\delta_2,T+r}}(x,u)$. By definition, there exists a trajectory $\x$ such that $\x(0)=x$, $\x(T)=x_1$, and $\x'(s)\in f(\x(s),u) +\delta_1\B$ for all $s\in [0,T]$. Let $\z(s)=\x(\frac{T}{T+r}s)$ for $s\in [0,T]$. Then $\z(0)=x$, $\z(T+r)=x_1$ and 
\begin{align*}
\z'(s) &= \frac{T}{T+r}\x'(\frac{T}{T+r}s) \in \frac{T}{T+r}f(\z(s),u) + \frac{T}{T+r}\delta_1\B \\
& \subset f(\z(s),u) -  \frac{r}{T+r}f(\z(s),u) +  \delta_1\B \\
& \subset f(\z(s),u) + (\frac{\abs{r}}{\tau^*-\abs{r}}+\delta_1)\B, 
\end{align*}
where we assumed $\abs{r}$ is sufficiently small so that $\abs{r}\le \tau^*$. Clearly, since $\delta_1<\delta_2$, we can choose $r^*>0$ so that  $\frac{\abs{r}}{\tau^*-\abs{r}}+\delta_1<\delta_2$ for all $\abs{r}\le r^*$. Hence, $\z'(s)\in f(\z(s),u) + \delta_2\B$ and $x_1=\z(T+r)\in \text{Post}_{\T_{\delta_2,T+r}}(x,u)$. 
\end{proof}
 
 Now we can present the proof of Theorem \ref{thm:decidability3}. 

\begin{proof}[Proof of Theorem \ref{thm:decidability3}]

Let $\eps_1$ and $\eps_2$ be as defined in the proof for Theorem \ref{thm:decidability}. Choose $\delta_3$ such that $\delta_2>\delta_3>\delta_1$. Let $\tau^*$, $\eta^*$, and $\mu^*$ be chosen so that (\ref{eq:margin}) and (\ref{eq:margin2}) (or (\ref{eq:margin}) and (\ref{eq:margin3}) if a proposition preserving partition is not used),  with $\delta_3$ replacing $\delta_2$ in (\ref{eq:margin}), hold for all $\tau\le \tau^*$, $\eta\le\eta^*$, and $\mu\le\mu^*$. 

Suppose that $(\phi,L_\eps)$ is realizable for $\S_{\delta_2}$ with a sampled-data control strategy with sampling period $T$. Without loss of generality, assume $\frac{\tau^*}{2}<T\le \tau^*$. Otherwise, one can divide $T$ by a positive integer number $N$ so that $\frac{T}{N}\in (\tau^*/2,\tau^*]$ and $(\phi,L_\eps)$ is realizable for $\S_{\delta_2}$ with a sampled-data control strategy with sampling period $T/N$ (with dwell-time $N$). 

Construct, by Theorem \ref{thm:complete}, $\T$ so that  
\begin{equation}\label{eq:soundnew}
\T_{\delta_3,T}  \preceq \T \preceq \T_{\delta_2,T}.
\end{equation}
Let $\tau\le \frac{\tau^*}{2}$ be chosen (guaranteed by Lemma \ref{lem:tau}) so that 
\begin{equation}\label{eq:tau}
\T_{\delta_1,T+r}\preceq_{\text{id}_{X}} \T_{\delta_3,T}.
\end{equation}
for all $\abs{r}\le\tau$. 

Let $L_{\eps_1}$, $(L_{\eps_1})_{\eps_2}$ and $L_{\eps_1+\eps_2}$ be as defined in the proof for Theorem \ref{thm:decidability}. 
By Proposition \ref{prop:intersample}, $(\phi,L_{\eps_1})$ is realizable for $\T_{\delta_2,T}$, because $\eps_2\ge\frac{(M+\delta_2)T}{2}$.  By Proposition \ref{prop:soundness} and (\ref{eq:soundnew}), $(\phi,L_{\eps_1})$ is realizable for  $\T$ and hence also for $\T_{\delta_3,T}$.
Let $m$ be the largest integer such that $m \tau\le T$. Then $\abs{m\tau-T}\le \tau$. By (\ref{eq:tau}), we obtain 
$$
\T_{\delta_1,m\tau} = \T_{\delta_1,T + (m\tau-T)}\preceq_{\text{id}_{X}} \T_{\delta_3,T}.
$$
By Proposition \ref{prop:soundness} again, $(\phi,L_{\eps_1})$ is realizable for $\T_{\delta_3,m\tau}$. By Proposition \ref{prop:intersample}, $(\phi,L)$ is realizable for $\S_{\delta_1}$ with a sampled-data control strategy with sampling period $m\tau$, because $\eps_1\ge\frac{(M+\delta_1)m\tau}{2}$. Finally, by Proposition \ref{prop:finersample}, $(\phi,L)$ is realizable for $\S_{\delta_1}$ with a sampled-data control strategy with sampling period $\tau$. One can algorithmically construct such a control strategy by synthesizing a controller strategy for the finite transition system $\T$ for $(\phi,L_{\eps_1})$. 
\end{proof}
 
We leave as an open problem to decide robust realizability by a sample-and-hold control strategy with \emph{any} sampling period.

\begin{problem}[Robust decidability]\label{prob:open}
Given a temporal logic formula $\phi$ with a labelling function $L$, numbers $\delta_2>\delta_1\ge 0$ and $\eps>0$, decide whether one of the following is true:
\begin{itemize}
\item There exists (and one can algorithmically construct) a sample-and-hold control strategy for $\S_{\delta_1}$ to realize the specification $(\phi,L)$; or
\item There does not exist a sample-and-hold control strategy for $\S_{\delta_2}$ to realize the specification $(\phi,L_\eps)$. 
\end{itemize}
\end{problem}

\section{Example}

\begin{ex}[Nonlinear car]
Consider a nonlinear car with bicycle dynamics \cite{astrom2010feedback,zamani2012symbolic} (details can be found in the Appendix). 
It can be verified that $L=1.2674$ and $M=1.5574$ give a valid Lipschitz constant and upper bound for the vector field, respectively, on the compact domains $X=[0,10]\times [0,10]\times [-\pi,\pi]$ and $U=[-1,1]\times [-1,1]$. To construct robustly complete abstractions, we can choose $\eta=\tau^2$ and $\mu=\eta$ (as in the proof of Theorem \ref{thm:complete} and likely not optimized) and then make $\tau$ as small as possible to satisfy the (\ref{eq:margin}) and (\ref{eq:margin2}). The following figure shows the changes in $\delta_2$ (labeled as $\delta$ and assuming $\delta_1=0$) and $\eps$ as the size of $\eta$ varies. While it is not surprising that the bounds given in the proof of Theorem \ref{thm:complete} are conservative, we can still see from Figure \ref{fig:result} that if we pick $\tau=0.2$, we can construct an abstraction that can be used to decide, for \emph{any} temporal logics specification, either the system is realizable for this specification, or the system perturbed by disturbance of size $\delta=0.1$ cannot realize this specification (of course, subject to an $\eps$-strengthening of the labelling function with $\eps=0.02$). 
\end{ex}

\begin{figure}[h]
\centering
\includegraphics[width=0.35\textwidth]{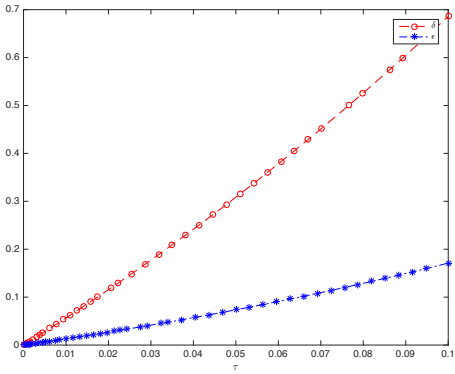}
\label{fig:result}
\caption{Size of sampling period required to achieve robustly complete abstraction using Theorem \ref{thm:complete} in view of (\ref{eq:margin}) and (\ref{eq:margin2}).}
\end{figure}
 
\section{Conclusions}

In this paper, we proved that control synthesis for sampled-data nonlinear systems with temporal logic specifications is robustly decidable in the sense that if a robust control strategy exists, then a robust control strategy can be found using a sufficiently fine discretization. The proof can be seen as an use of validated forward Euler numerical scheme. By explicitly quantifying the error bounds, we showed that it is possible to construct arbitrarily close system approximations that are suitable for control synthesis. We see the main contribution of this work as showing the existence of robustly complete abstractions for nonlinear sampled-data control systems. We also show that inter-sampling behaviours can be accounted for by having arbitrarily small strengthening of the labelling function. It is hoped that this work will motivate further research on computing tight abstractions of nonlinear control systems. In this regard, Theorem \ref{thm:complete} on robust completeness can be viewed as a potential metric on closeness of abstractions. We also leave as an open problem to decide robust realizability by a sample-and-hold control strategy with \emph{any} sampling period.

\bibliographystyle{IEEEtran}
\bibliography{test}

\newpage

\appendix

\subsection{Linear temporal logic}

We use linear temporal logic (LTL) without the next operator (denoted by $\ltlx$ \cite{baier2008principles}) to specify system properties. This logic consists of the usual propositional logic operators (e.g., \textbf{true}, \textbf{false}, {\em negation\/} ($\neg$), {\em disjunction\/} ($\vee$), {\em conjunction\/} ($\wedge$) and
{\em implication\/} ($\rightarrow$)), and additional temporal operators (e.g., {\em always\/} ($\always$), {\em eventually\/} ($\eventually$), {\em until\/} ($\mathcal{U}$) and {\em release\/} ($\mathcal{R}$)). 

\subsubsection{Syntax}

We can define the syntax of $\ltlx$ over a set of atomic propositions $\Pi$ inductively as follows:
\begin{itemize}
\item \textbf{true} and \textbf{false} are $\ltlx$ formulae; 
\item an atomic proposition $\pi\in\Pi$ is an $\ltlx$ formula; 
\item if $\phi$ and $\psi$ are $\ltlx$ formulas, then $\neg\phi$, $\phi \vee \phi$, and $\phi \U \phi$ are $\ltlx$ formulas. 
\end{itemize}

\emph{Negation Normal Form (NNF):} All $\ltlx$ formulas can be transformed into negation normal form \cite[p.~132]{clarke1999model}, where
\begin{itemize}
\item all negations appear only in front of the atomic propositions\footnote{We assume that all negations can be effectively removed by introducing new atomic propositions corresponding to the negations of current ones.}; 
\item only the logical operators \textbf{true}, \textbf{false}, $\wedge$, and $\vee$ can appear; and 
\item only the temporal operators $\U$ and $\R$ can appear, where $\R$ is defined by $\phi_1\R\phi_2\equiv \neg(\neg\phi_1\U\neg\phi_2)$, called the \emph{dual until} operator.
\end{itemize}
For syntactic convenience, we can define additional temporal operators $\Box$ and $\Diamond$ by $\Box \phi\equiv \texttt{false}\R\phi$ and $\Diamond \phi\equiv \texttt{true}\U\phi$.

\subsubsection{Semantics}

We consider two types of semantics for $\ltlx$ formulas, namely, continuous-time and discrete-time semantics. To define semantics, an atomic proposition is interpreted as a subset of the state space on which the atomic proposition holds true. This is achieved by defining a \emph{labelling function} $L:\,\Real^n\ra 2^{\Pi}$ that maps a state to a set of propositions that hold true for this state. 

{\it Continuous-time semantics of $\ltlx$:} Given a continuous-time function $\xi:\,[0,\infty)\ra \Real^n$, we define $\xi,t\vDash(\phi,L)$ with respect to an $\ltlx$ formula $\phi$ and a labelling function $L$ at time $t$ inductively as follows:
\begin{itemize}
\item $\xi,t\vDash(\pi,L)$ if and only if $\pi\in L(\xi(t))$;
\item $\xi,t\vDash(\phi_1\vee \phi_2,L)$ if and only if $\xi,t\vDash(\phi_1,L)$ or $\xi,t\vDash(\phi_2,L)$;
\item $\xi,t\vDash(\phi_1\wedge \phi_2,L)$ if and only if $\xi,t\vDash(\phi_1,L)$ and $\xi,t\vDash(\phi_2,L)$;
\item $\xi,t\vDash(\phi_1\U\phi_2,L)$ if and only if there exists $t'\ge 0$ such that $\xi,t+t'\vDash(\phi_2,L)$ and for all $t''\in [0,t')$, $\xi,t+t''\vDash(\phi_1,L)$;
\item $\xi,t\vDash(\phi_1\R\phi_2,L)$ if and only if, for all $t'\ge 0$, at least one of the following holds: $\xi,t+t'\vDash(\phi_2,L)$ or there exists $t''\in [0,t')$ such that $\xi,t+t''\vDash(\phi_1,L)$.
\end{itemize}
We write $\xi\vDash(\phi,L)$ if $\xi,0\vDash(\phi,L)$. If the labelling function is clear from the context, we simply write $\xi\vDash\phi$.

{\it Discrete-time semantics of $\ltlx$:} Given a sequence $\rho=\set{x_i}_{i={0}}^{\infty}$ in $\Real^n$, we define $\rho,i\vDash \phi$ with respect to an $\ltlx$ formula $\phi$ and a labelling function $L$ inductively as follows:
\begin{itemize}
\item $\rho,i\vDash(\pi,L)$ if and only if $\pi\in L(x_i)$;
\item $\rho,i\vDash(\phi_1\vee \phi_2,L)$ if and only if $\rho,i\vDash(\phi_1,L)$ or $\rho,i\vDash(\phi_2,L)$;
\item $\rho,i\vDash(\phi_1\wedge \phi_2,L)$ if and only if $\rho,i\vDash(\phi_1,L)$ and $\rho,i\vDash(\phi_2,L)$;
\item $\rho,i\vDash(\phi_1\U\phi_2,L)$ if and only if there exists $j\ge i$ such that $\rho,j\vDash(\phi_2,L)$ and $\rho,k\vDash(\phi_1,L)$ for all $k\in [i,j)$;
\item $\rho,i\vDash(\phi_1\R\phi_2,L)$ if and only if, for all $j\ge i$, at least one of the following holds: $\rho,j\vDash(\phi_2,L)$ or there exists $k\in [i,j)$ such that $\rho,k\vDash(\phi_1,L)$.
\end{itemize}
Similarly, we write $\rho\vDash\phi$ if $\rho,0\vDash\phi$. If the labelling function is clear from the context, we simply write $\xi\vDash\phi$. 

\subsection{Proof of Proposition \ref{prop:intersample}}

\begin{proof}
The implementation of control strategy and preservation of dwell-time are straightforward. The main part is to show correctness of temporal logic formula. Suppose that a trajectory for $\S_\delta$ a control strategy is $\x(t)$. Let $\u$ be the resulting control input signal. The corresponding path of an execution of $\T_{\delta,\tau}$ is given by $\rho=\x(0),\x(\tau),\x(2\tau),\cdots$. 

We need to show that (1) $\rho\vDash (\phi,L)$ implies $x\vDash (\phi,L_\eps)$, and (2) $x\vDash (\phi,L_\eps)$ implies $\rho\vDash (\phi,L)$. The following proof, modelled after that for Theorem 4.1 in \cite{liu2016finite}, is an inductive argument based on the structure of $\ltlx$ formulas. In fact, the proof for (1) is very similar to of for Theorem 4.1 in \cite{liu2016finite}. In the following, we prove case (2), that is, $x\vDash (\phi,L_\eps)$ implies $\rho\vDash (\phi,L_\eps)$. We do so by proving a stronger statement: for every $i\ge 0$, $x,t\vDash (\phi,L_\eps)$ for some $t\in J_i=[i\tau-\frac{\tau}{2},i\tau+\frac{\tau}{2}]$ implies $\rho,i \vDash(\phi,L_\eps)$. 

%
%

\textbf{Case $\phi=\pi$}: Suppose that $x,t\vDash(\pi,L_\eps)$ for some $t\in J_i$, we have to show that $\pi\in L(x(i\tau))$. This follows from $\pi\in L_\eps(x(t))$, $\eps\ge (M+\delta)\tau/2$ and
\begin{equation}\label{eq:basic}
\abs{x(t)-x(i\tau)}\le\abs{x(t)-x(\tau_i)} \le (M+\delta)\tau/2.
\end{equation}


\textbf{Case $\phi=\phi_1\R\phi_2$}: Suppose that $x(t)\vDash(\phi,L_\eps)$ for some $t\in J_i$. We need to show that $\rho,i\vDash(\phi,L)$, that is, for all $j\ge i$, either $\rho,j\vDash(\phi_2,L)$ holds or there exists some $k\in [i,j)$ such that $\rho,k\vDash(\phi_1,L)$ holds. Since $x(t)\vDash(\phi,L_\eps)$ for some $t\in J_i$, we know that for every $t'\ge t$, either $x(t')\vDash(\phi,L_\eps)$ holds or there exists $s\in [t,t')$ such that $x(s)\vDash(\phi,L_\eps)$ holds. Let $t'=j\tau - \frac{\tau}{2}$. If the former holds, we have $x(t')\vDash(\phi,L_\eps)$ for $t'=j\tau - \frac{\tau}{2} \in J_j$. By the inductive assumption, this implies $\rho,j\vDash(\phi_2,L)$. If the latter holds, there exists some interval $J_k$ such that $s\in J_k$, $k\in [i,j)$, and $x(s)\vDash(\phi,L_\eps)$. It follows by the inductive assumption that $\rho,k\vDash(\phi_2,L)$. 


The other cases are straightforward.
\end{proof}

\subsection{Details of the nonlinear car model}

\begin{ex}[Nonlinear car]
Consider a nonlinear car with bicycle dynamics \cite{astrom2010feedback} (parameters are taken from  \cite{zamani2012symbolic}):  
  \begin{align*}
      {x}' &=  v \cos (\alpha+\theta)/\cos (\alpha),\\
      {y}' &= v \sin (\alpha+\theta)/\cos (\alpha),\\
      {\theta}' &= v \tan (\phi),
  \end{align*}
where $(x,y)$ are the position (centre of mass) and $\theta$ is the heading angle, the controls are $(v,\phi)$ with $v$ being the wheel speed and $\phi$ the steering angle. The wheel base is given by $b$ and $a$ is the distance between centre of mass and rear wheels. We choose $a=0.5$ and $b=1$ as in \cite{zamani2012symbolic}. The variable $\alpha=\arctan (a\tan (u_2)/b)$. 
\end{ex}

\end{document}